\newtheorem{theorem}{Theorem}
\newtheorem{proposition}[theorem]{Proposition}
\newtheorem{lemma}[theorem]{Lemma}
\theoremstyle{definition}
\theoremstyle{remark}
\newtheorem{remark}[theorem]{Remark}
\newcommand{\C}{\mathbb{C}}
\newcommand{\dd}{\, \mathrm{d}}
\newcommand{\eps}{\varepsilon}
\renewcommand{\epsilon}{\varepsilon}
\newcommand{\ii}{\mathrm{i}}
\newcommand{\id}{\mathbb{I}}
\newcommand{\N}{\mathbb{N}}
\newcommand{\norm}[2][]{{\left\|#2\right\|}_{#1}}   
\renewcommand{\phi}{\varphi}
\newcommand{\R}{\mathbb{R}}
\newcommand{\sclp}[2][]{{\langle#2\rangle} _{#1}}
\newcommand{\set}[2]{\left\{#1: \, #2\right\}}
\newcommand{\T}{\mathbb{T}}
\newcommand{\Hs}{\mathcal{H}}
\newcommand{\Gs}{\mathcal{G}}
\newcommand{\Fs}{\mathcal{F}}
\DeclareMathOperator{\ran}{ran}
\begin{document}

\title[Friedrichs Extension and Min-Max Principle in a Gap]{Friedrichs Extension and Min-Max Principle for Operators with a Gap}

\author{Lukas Schimmer}
\address{Lukas Schimmer, QMATH, Department of Mathematical Sciences, University of Copenhagen, Universitetsparken 5, DK-2100 Copenhagen \O, Denmark}
\email{schimmer@math.ku.dk}

\author{Jan Philip Solovej}
\address{Jan Philip Solovej, QMATH, Department of Mathematical Sciences, University of Copenhagen, Universitetsparken 5, DK-2100 Copenhagen \O, Denmark}
\email{solovej@math.ku.dk}

\author{Sabiha Tokus}
\address{Sabiha Tokus, QMATH, Department of Mathematical Sciences, University of Copenhagen, Universitetsparken 5, DK-2100 Copenhagen \O, Denmark}
\email{sabiha.tokus@math.ku.dk}

\begin{abstract}

Semibounded symmetric operators have a distinguished self-adjoint extension, the Friedrichs extension. The eigenvalues of the Friedrichs extension are given by a variational principle that involves only the domain of the symmetric operator. Although Dirac operators describing relativistic particles are not semibounded, the Dirac operator with Coulomb potential is known to have a distinguished extension. Similarly, for Dirac-type operators on manifolds with a boundary a distinguished self-adjoint extension is characterised by the  Atiyah--Patodi--Singer boundary condition. In this paper we relate these extensions to a generalisation of the Friedrichs extension to the setting of operators satisfying a gap condition. In addition we prove, in the general setting, that the eigenvalues of this extension are also given by a variational principle that involves only the domain of the symmetric operator.

\end{abstract}

\subjclass[2010]{49R05, 49S05, 47B25, 81Q10}
\keywords{Self-adjoint extension; Spectral gap; Variational principle; Schur complement; Dirac operator}

\maketitle

\renewcommand{\thefootnote}{${}$} \footnotetext{The authors were supported by ERC Advanced Grant no.\ 321029 and VILLUM FONDEN through the QMATH Centre of Excellence (grant no.\ 10059).\\
}

\section{Introduction and Main Result}

For a symmetric, semibounded operator $A$ with dense domain $D(A)$ on a Hilbert space $\Hs$ there exists a distinguished self-adjoint extension, the Friedrichs extension $A_F$. This extension was introduced by Friedrichs \cite{Freidrichs1934} in 1934. Its eigenvalues can be computed by a variational principle. 

More precisely, if $A$ is bounded from below by $\lambda_1$, where
\begin{align}
\lambda_1=\inf_{z\in D(A)}\frac{\sclp[\Hs]{z,Az}}{\norm[\Hs]{z}^2}>-\infty\,,
\label{eq:bdsemi} 
\end{align}
a variational principle (see e.g. \cite[Theorem 4.5.2]{Davies1995}) states that the values
\begin{align}
\lambda_k=\inf_{\substack{V\subset D(A)\\ \dim V=k}}\sup_{z\in V }\frac{\sclp[\Hs]{z,Az}}{\norm[\Hs]{z}^2}
\label{eq:varsemi} 
\end{align}
for $k\ge 1$ are the discrete spectrum of $A_F$ in the interval $(-\infty,\sup_{k\ge 1}\lambda_k)$,
counted with multiplicities 
\begin{align*}
d_k\coloneqq\#\set{j\ge 1}{\lambda_j=\lambda_k}
\end{align*}
as long as $d_k<\infty$. If $d_k=\infty$ then $\lambda_k$ is in the essential spectrum of $A_F$. While similar variational principles hold for all semibounded self-adjoint extensions of $A$, we stress that in \eqref{eq:varsemi} only the domain $D(A)$ is needed, making the spectrum of the Friedrichs extension especially accessible to numerical methods. This is a consequence of $D(A)$ being a form core for $A_F$.

For symmetric operators $A$ that are not semibounded, Friedrichs' construction is not applicable. Of particular interest is the case where the self-adjoint extension of $A$ is expected to have a gap in its spectrum. In a similar way to the semibounded case, one would like to solve the following problems.  
\begin{enumerate}[(P1)]
\item Define a distinguished self-adjoint extension $A_F$ of $A$.
\item Provide a simple variational principle that allows to compute the eigenvalues of $A_F$, ideally only from the symmetric operator $A$.
\end{enumerate}

In this paper, we will generalise the construction of the Friedrichs extension $A_F$ to symmetric operators $A$ where the lower semiboundedness \eqref{eq:bdsemi} is replaced by a gap condition. We will furthermore relate the extension to a variational principle that only involves the domain of the symmetric operator $A$ hence providing solutions to both problems, (P1) and (P2).  An important example of an operator that our results apply to is the Dirac operator $H_\nu$ on $L^2(\R^3;\C^4)$ with Coulomb potential $-\nu/|x|$. The operator $H_\nu$ is not semibounded and for $\nu\ge\sqrt{3}/2$ it is not essentially self-adjoint on the space of smooth, compactly supported functions $\mathcal{C}_0^\infty(\R^3;\C^4)$. 

Our results also apply to Dirac-type operators on manifolds with a boundary. For these operators, there exists a distinguished self-adjoint extension which can be characterised by a non-local boundary condition, as first introduced by Atiyah, Patodi and Singer in the proof of their index theorem \cite{Atiyah1975}.  We will show that this boundary condition naturally arises from the construction given in this paper.


The problem (P1) has been studied already by Krein \cite{Krein1947} for a symmetric operator $A$ that is not semibounded but satisfies a gap condition of the form ($\lambda_0<\lambda_1$)
\begin{align}
\norm[\Hs]{\left(A-\frac{\lambda_0+\lambda_1}{2}\right)z}\ge \frac{\lambda_1-\lambda_0}{2}\norm[\Hs]{z}\,.
\label{eq:Kreingap} 
\end{align}
 In Krein's work it is proved that such an operator has a self-adjoint extension that preserves the gap, i.e. the interval $(\lambda_0,\lambda_1)$ belongs to the resolvent set of the extension. Subsequently Brasche and Neidhart \cite{Brasche1994} parametrised all gap-preserving self-adjoint extensions of $A$ by using a suitable representation for their inverses. The authors' parametrisation allowed them to identify one of the extensions as the Friedrichs extension in the limit~$\lambda_0\to-\infty$. 

The type of operators we wish to consider here satisfy a gap condition which is seen to imply \eqref{eq:Kreingap}, as will be proved in Remark \ref{rem:gap}. In analogy to the Friedrichs extension preserving the lower-semiboundedness, our extension $A_F$ preserves \eqref{eq:Kreingap}. 

More recently, different forms of gap conditions have been considered. Esteban and Loss \cite{Esteban2008} considered a block-matrix operator
\begin{align}
\begin{pmatrix}
P&Q\\T&-S
\end{pmatrix}
\label{eq:block} 
\end{align}
densely defined on a domain $\mathcal D_0\times \mathcal D_0\subset\Hs_0\times \Hs_0$ where $P=P^*, S=S^*, Q=T^*$ and $S\ge -\lambda_0>0$. Furthermore they assumed that $P,Q,S,T, S^{-1}T$ and $QS^{-1}T$ map $\mathcal D_0$ into $\Hs_0$. Their gap condition was phrased in terms of the assumption that for some $\lambda_1>0$ and all $z\in \mathcal D_0$
\begin{align*}
q_{\lambda_1}(z,z)\coloneqq\sclp[\Hs]{(S+\lambda_1)^{-1}Tz,Tz}+\sclp[\Hs]{(P-\lambda_1)z,z}\ge 0\,.
\end{align*}
In the case of Dirac operators $H_\nu$ with Coulomb potentials this assumption constitutes a Hardy inequality that was previously proved analytically by Dolbeault, Esteban, Loss and Vega \cite{Dolbeault2004}. In this way Loss and Esteban  \cite{Esteban2007} were able to define a distinguished self-adjoint extension for $H_\nu$ up to and including the critical value $\nu=1$. For $\nu<1$ their extension coincides with the previously known distinguished extension established separately by Schmincke \cite{Schmincke1972}, Nenciu \cite{Nenciu1976} and W{\"u}st \cite{Wust1975} (which were all proved to be equal by Klaus and W{\"u}st \cite{Klaus1979}).  

Regarding the second problem (P2), variational principles have been studied by several authors for self-adjoint operators with gaps. For Dirac operators with negative potentials Talman \cite{Talman1986} as well as Datta and Deviah \cite{Datta1988} suggested a way to compute the first eigenvalue. The idea was to split the optimisation in the variational principle. Decomposing the Hilbert space into a direct sum 
\begin{align*}
L^2(\R^3;\C^4)=(L^2(\R^3;\C^2)\times\{0\})\oplus (\{0\}\times L^2(\R^3;\C^2))
\end{align*}
corresponding to the upper and lower spinors, the first eigenvalue would be given by first maximising the quadratic form over one component and then minimising over the other. More precisely, for suitably chosen spaces 
\begin{align*}
F_+\subset L^2(\R^3;\C^2)\times\{0\}\,,\qquad F_-\subset \{0\}\times L^2(\R^3;\C^2)
\end{align*}
the authors suggested that
\begin{align*}
\lambda_1=\inf_{x_+\in F_+\setminus\{0\}}\sup_{y_-\in F_-}\frac{\sclp[\Hs]{x_++y_-,A(x_++y_-)}}{\norm[\Hs]{x_++y_-}^2}\,.
\end{align*}
For Dirac operators $H_\nu$ with Coulomb potentials such a variational principle describing the discrete spectrum was proved by Dolbeault, Esteban and S{\'e}r{\'e} \cite{Dolbeault2000} in the case of essentially self-adjointness $\nu\in[0,\sqrt{3}/2)$ where they could choose $F_+=\mathcal{C}_0^\infty(\R^3;\C^2)\times\{0\}$ and $F_-=\{0\}\times \mathcal{C}_0^\infty(\R^3;\C^2)$. Their argument for $\nu\in(\sqrt{3}/{2},1)$ was not complete. For $\nu<1$ Morozov and M{\"u}ller \cite{Morozov2015,Muller2016} showed that $F_+=H^{1/2}(\R^3;\C^2)\times\{0\}$ and $F_-=\{0\}\times H^{1/2}(\R^3;\C^2)$ are valid choices to obtain a variational principle for the distinguished extension. 

In the general setting of a self-adjoint operator with spectral gap, variational principles that use an orthogonal decomposition of the Hilbert space were investigated by Griesemer and Siedentop \cite{Griesemer1999}. Abstract variational principles were also proved in \cite{Dolbeault2000,Morozov2015} and with different assumptions by Kraus, Langer and Tretter \cite{Kraus2004} (see also \cite{Tretter2008}).
In all these results however, the operator is a-priori assumed to be self-adjoint or essentially self-adjoint. 

Only recently Esteban, Lewin and S{\'e}r{\'e} \cite{Esteban2017} extended the variational principle for Dirac operators with Coulomb potentials to  all $\nu\in[0,1]$ and discussed its connections to the distinguished self-adjoint extension. Building upon the results of \cite{Dolbeault2000}  they showed that for any $\nu\in[0,1]$ it is sufficient to choose $F_+=\mathcal{C}_0^\infty(\R^3;\C^2)\times\{0\}$ and $F_-=\{0\}\times \mathcal{C}_0^\infty(\R^3;\C^2)$ to obtain the eigenvalues of the distinguished extension, evoking similarities to the Friedrichs extension. 

Our main result, Theorem 1 clarifies the connection between a 
distinguished self-adjoint extension and a variational principle in the 
case of operators satisfying a general gap condition. It applies, in 
particular, to the Dirac--Coulomb operator generalising the result of \cite{Esteban2017}.

\begin{theorem}\label{th:main}
Let  $A$ be a densely defined symmetric operator on a Hilbert space $\Hs$ \label{def:Hs} and let  $\sclp[\Hs]{x,Ay}$ be the corresponding real quadratic form with form domain equal to the operator domain $D(A)$.\label{def:DA} Furthermore the following assumptions are made.
\begin{enumerate}[(i)]
\item  \label{aspt:F} \textbf{\emph{Orthogonal decomposition:}} There are orthogonal projections $\Lambda_\pm$ on $\Hs$ such that \label{def:Hs+}\label{def:Hs-} 
\begin{align*}
\Hs=\Lambda_+\Hs\oplus\Lambda_-\Hs=\Hs_+\oplus\Hs_-
\end{align*}
and \label{def:F-}\label{def:F+}
\begin{align*}
F_\pm\coloneqq\Lambda_\pm D(A)\subset D(A)\,.
\end{align*}
\item \label{aspt:lambda} \textbf{\emph{Gap condition:}}
\begin{align*}
\sup_{y_-\in F_-\setminus\{0\}}\frac{\sclp[\Hs]{y_-,Ay_-}}{\norm[\Hs]{y_-}^2}\eqqcolon\lambda_0
<\lambda_1\coloneqq
\inf_{x_+\in F_+\setminus\{0\}}\sup_{y_-\in F_-}\frac{\sclp[\Hs]{x_++y_-,A(x_++y_-)}}{\norm[\Hs]{x_++y_-}^2}\,.
\end{align*}
\item The operator $\Lambda_-A|_{F_-}:F_-\to\Hs_-$ is essentially self-adjoint. 
\end{enumerate}
Then there exists a self-adjoint extension $A_F$ of $A$ such that for $k\ge1$ the numbers
\begin{align}
\lambda_k\coloneqq\inf_{\substack{V\subset F_+\\ \dim V=k}}\sup_{z\in (V\oplus F_-)\setminus\{0\}}\frac{\sclp[\Hs]{z,Az}}{\norm[\Hs]{z}^2}
\label{eq:var} 
\end{align}
are the eigenvalues of $A_F$ in the set $(\lambda_0,\sup_{\ell\ge1}\lambda_\ell)$ counted with multiplicities 
\begin{align*}
d_k\coloneqq\#\set{j\ge 1}{\lambda_j=\lambda_k}
\end{align*}
as long as $d_k<\infty$. If $d_k=\infty$ then $\lambda_k$ is in the essential spectrum of $A_F$.
The operator $A_F$ is the unique self-adjoint extension with the property that $D(A_F)\subset \Fs_+\oplus\Hs_-$, for a subspace $\Fs_+\subset \Hs_+$ defined in the proof. 
\end{theorem}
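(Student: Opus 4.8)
\emph{Outline of the intended argument.}

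\textbf{Block form.} I would begin by writing $A$ in block form with respect to $\Hs=\Hs_+\oplus\Hs_-$. For $x_+\in F_+$ set $A_{++}x_+:=\Lambda_+Ax_+$, $A_{-+}x_+:=\Lambda_-Ax_+$, and for $y_-\in F_-$ set $A_{+-}y_-:=\Lambda_+Ay_-$, $A_{--}y_-:=\Lambda_-Ay_-$. Since $D(A)=F_+\oplus F_-$ by assumption (i) and $A$ is symmetric, $A_{++},A_{--}$ are symmetric, $A_{-+}$ is closable (its adjoint contains the densely defined operator $A_{+-}$), and
\begin{align*}
\sclp[\Hs]{z,Az}=\sclp[\Hs]{x_+,A_{++}x_+}+2\re\sclp[\Hs]{A_{-+}x_+,y_-}+\sclp[\Hs]{y_-,A_{--}y_-},\qquad z=x_++y_-\in D(A).
\end{align*}
By assumption (iii), the closure of $A_{--}$ in $\Hs_-$ — still denoted $A_{--}$ — is self-adjoint with $F_-$ as a form core, and the definition of $\lambda_0$ in (ii) gives $A_{--}\le\lambda_0$. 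Hence for every $t>\lambda_0$ the resolvent $R_t:=(A_{--}-t)^{-1}$ is bounded, self-adjoint and nonpositive, with $\norm[\ope]{R_t}\le(t-\lambda_0)^{-1}$.

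\textbf{Schur complement form and reduced operator.} Because $F_-$ is a form core for the semibounded operator $A_{--}$, completing the square in $y_-$ gives, for every $t>\lambda_0$ and every $x_+\in F_+$,
\begin{align*}
\sup_{y_-\in F_-}\sclp[\Hs]{x_++y_-,(A-t)(x_++y_-)}=q_t(x_+):=\sclp[\Hs]{x_+,(A_{++}-t)x_+}-\sclp[\Hs]{A_{-+}x_+,R_t\,A_{-+}x_+},
\end{align*}
a real symmetric form on $F_+$ with $-\sclp[\Hs]{A_{-+}x_+,R_t\,A_{-+}x_+}\ge0$. Since $t\mapsto q_t(x_+)$ is continuous, strictly decreasing, and tends to $-\infty$, the definitions of $\lambda_0,\lambda_1$ in (ii) translate into two facts: (a) for $\mu\in(\lambda_0,\lambda_1)$ one has $q_\mu\ge(\lambda_1-\mu)\norm[\Hs]{\cdot}^2$ on $F_+$; and (b) for each $x_+\in F_+\setminus\{0\}$ the unique zero $\mu^*(x_+)$ of $t\mapsto q_t(x_+)$ equals $\sup_{y_-\in F_-}\sclp[\Hs]{x_++y_-,A(x_++y_-)}/\norm[\Hs]{x_++y_-}^2$. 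Fixing $\mu\in(\lambda_0,\lambda_1)$, the plan is then to prove that the form $q_\mu$, which is $\ge(\lambda_1-\mu)\norm[\Hs]{\cdot}^2>0$, is closable in $\Hs_+$; letting $\Fs_+\subset\Hs_+$ denote the form domain of its closure and $C\ge\lambda_1-\mu$ the associated self-adjoint operator, $C$ is precisely the (generalised Friedrichs) operator attached to the Schur complement and $F_+$ is a form core for it.

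\textbf{Construction and uniqueness of $A_F$.} I would then define $A_F$ by reversing the block Gaussian elimination,
\begin{align*}
A_F-\mu:=L^*\begin{pmatrix}C&0\\0&A_{--}-\mu\end{pmatrix}L,\qquad L:=\begin{pmatrix}\id&0\\R_\mu\overline{A_{-+}}&\id\end{pmatrix},
\end{align*}
on the maximal domain on which this makes sense (equivalently, $A_F$ should be describable as the restriction of $A^*$ to $\{w\in D(A^*):\Lambda_+w\in\Fs_+\}$). The points to verify are that this prescription genuinely produces a \emph{self-adjoint} operator — the triangular factor $L$ is unbounded, so this is the operator-level counterpart of the closability statement, handled via the Frobenius–Schur complement formula — and that $A_F\supset A$, which is a direct computation using $(A_{--}-\mu)R_\mu A_{-+}x_+=A_{-+}x_+$ and the cancellation of the $A_{+-}R_\mu A_{-+}$ term built into $C$. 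By construction $D(A_F)\subset\Fs_+\oplus\Hs_-$. For uniqueness, if $\tilde A$ is any self-adjoint extension of $A$ with $D(\tilde A)\subset\Fs_+\oplus\Hs_-$, performing the Schur reduction on $\tilde A$ (which likewise has $\mu$ in its resolvent set) produces a nonnegative self-adjoint extension of the Schur complement operator whose form domain lies in $\Fs_+$; since $\Fs_+$ is the form domain of the Friedrichs extension $C$, that extension must equal $C$, whence $\tilde A=A_F$.

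\textbf{Variational principle and main obstacle.} For $t>\lambda_0$ let $T(t)$ be the self-adjoint operator attached to the closable form $q_t$ on $F_+$; the family $t\mapsto T(t)$ is decreasing with common form core $F_+$, and the Frobenius–Schur factorisation shows $A_F-t$ is congruent to $T(t)\oplus(A_{--}-t)$ with $A_{--}-t<0$, so for $t>\lambda_0$ one has $t\in\spec(A_F)$ iff $0\in\spec(T(t))$, with matching multiplicity and discrete/essential dichotomy. By the classical variational principle \eqref{eq:varsemi} applied to $T(t)$ with form core $F_+$, its $k$-th min–max value is $\nu_k(t)=\inf_{V\subset F_+,\,\dim V=k}\sup_{x_+\in V\setminus\{0\}}q_t(x_+)/\norm[\Hs]{x_+}^2$, and $t\mapsto\nu_k(t)$ is continuous and strictly decreasing. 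Combining this with fact (b) above identifies $\lambda_k$ from \eqref{eq:var} as the unique $t$ with $\nu_k(t)=0$; invoking \eqref{eq:varsemi} for $T(\lambda_k)$ — and the observation that $\sup_j\nu_j(\lambda_k)>0$ exactly when $\lambda_k<\sup_\ell\lambda_\ell$ — then shows $\lambda_k$ is an eigenvalue of $T(\lambda_k)$ of multiplicity $d_k$ when $d_k<\infty$ and lies in $\spec_{\mathrm{ess}}(T(\lambda_k))$ when $d_k=\infty$; transferring through the congruence yields the assertion for $A_F$. The crux of the whole argument is the closability of $q_\mu$ (equivalently, the self-adjointness of $A_F$ with its unbounded triangular factor): since neither $A_{++}$ nor the off-diagonal blocks are bounded or semibounded, this is exactly where assumptions (ii) and (iii) must be used non-formally — (iii) to keep $R_t$ bounded and $F_-$ a form core, (ii) to supply the positivity $q_\mu\ge(\lambda_1-\mu)\norm[\Hs]{\cdot}^2$ without which no Friedrichs-type construction is available. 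Everything else is standard bookkeeping with block operators, monotone families of self-adjoint operators, and transfer of the min–max principle.
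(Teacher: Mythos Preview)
Your overall architecture (block decomposition, Schur complement $q_t$, Friedrichs-type construction for the reduced operator, then a monotone-family argument to read off the $\lambda_k$) matches the paper's. The genuine gap is the step you flag yourself but do not resolve: closability of $q_\mu$ on $\Hs_+$. The positivity $q_\mu\ge(\lambda_1-\mu)\norm[\Hs]{\cdot}^2$ does \emph{not} imply closability; a strictly positive densely defined quadratic form need not be closable (think of $f\mapsto|f(0)|^2$ on $C[0,1]\subset L^2[0,1]$). Concretely, if $x_n\to0$ in $\Hs_+$ and $(x_n)$ is $q_\mu$-Cauchy, to conclude $q_\mu(x_n)\to0$ you must control the cross term $\sclp[\Hs]{A_{-+}z,R_\mu A_{-+}x_n}$ for fixed $z\in F_+$; with only $\norm[\Hs]{x_n}\to0$ there is no bound on $\norm[\Hs]{R_\mu A_{-+}x_n}$, and moving derivatives onto $z$ leaves an uncontrolled $\norm[\Hs]{A_{-+}x_n}$. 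The paper fixes this by first proving (using assumption (iii) via density of $(B+E)F_-$) that $R_\mu A_{-+}$ is closable, letting $\Gs_+\subset\Hs_+$ be the domain of its closure $L_\mu$ equipped with the graph norm $\norm[\mu]{\cdot}$, and then showing $q_\mu$ is closable as a semibounded form on the Hilbert space $(\Gs_+,\norm[\mu]{\cdot})$ --- precisely because now $\norm[\mu]{x_n}\to0$ forces $\norm[\Hs]{L_\mu x_n}\to0$. Your $\Fs_+$ must be defined inside $\Gs_+$, not directly inside $\Hs_+$.

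This has two knock-on effects you should anticipate. First, the self-adjoint Schur complement $Q_\mu$ lives on $(\Gs_+,\sclp[\mu]{\cdot,\cdot})$, so its min--max levels carry $\norm[\mu]{x_+}^2$ in the denominator, not $\norm[\Hs]{x_+}^2$; linking the zeros of $t\mapsto\mu_k(Q_t)$ to the $\lambda_k$ of \eqref{eq:var} is a separate monotonicity lemma (the paper's Lemma~\ref{lem:var}), and your congruence $A_F-t\sim T(t)\oplus(A_{--}-t)$ via the \emph{unbounded} $L$ does not transfer spectra automatically --- the paper works instead with Weyl-type sequences in both directions. Second, $E$-independence of $\Fs_+$ (and hence of $D(A_F)$) requires comparing $q_E$ and $q_{E'}$, and by the resolvent identity this difference is controlled by $\norm[E]{\cdot}$, not by $\norm[\Hs]{\cdot}$; the intermediate space $\Gs_+$ is exactly what makes this comparison go through.
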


\begin{remark}\label{rem:gap}
Assumptions $(i)$ and $(ii)$ of Theorem \ref{th:main} imply that $A$ satisfies the gap condition \eqref{eq:Kreingap}. To see this, we let $x=x_++x_-\in D(A)$ and for given $\eps>0$ choose $y_-^\eps\in D(A)$ such that 
\begin{align}
\sclp[\Hs]{x_++y_-^\eps, A(x_++y_-^\eps)}\ge (\lambda_1-\eps)\norm[\Hs]{x_++y_-^\eps}^2\,.
\label{eq:yeps} 
\end{align}
Then with $\lambda\coloneqq(\lambda_0+\lambda_1)/2$
\begin{align*}
\norm[\Hs]{\left(A-\lambda\right)x}
&\ge\sup_{z\in D(A)}\frac{\left|\Re\sclp[\Hs]{(A-\lambda)x,z}\right|}{\norm[\Hs]{z}}\\
&=\sup_{z\in D(A)}\frac{\left|\sclp[\Hs]{x+z,(A-\lambda)(x+z)}
-\sclp[\Hs]{x-z,(A-\lambda)(x-z)}\right|}{4\norm[\Hs]{z}}\,.
\end{align*}
Choosing $z\coloneqq x_+-x_-+2y_-^\eps\in D(A)$ and using \eqref{eq:yeps} together with the definition of $\lambda_0$ we obtain the lower bound
\begin{align*}
\norm[\Hs]{\left(A-\lambda\right)x}
&\ge \frac{\left|\sclp[\Hs]{x_++y_-^\eps,(A-\lambda)(x_++y_-^\eps)}
-\sclp[\Hs]{x_--y_-^\eps,(A-\lambda)(x_--y_-^\eps)}\right|}{\norm[\Hs]{z}}\\
&\ge\left(\frac{\lambda_1-\lambda_0}{2}-\eps\right)\frac{\norm[\Hs]{x+z}^2+\norm[\Hs]{x-z}^2}{4\norm[\Hs]{z}}\,.
\end{align*}
Using the parallelogram law and the fact that $a+1/a\ge 2$ for any $a>0$ we obtain
\begin{align*}
\norm[\Hs]{\left(A-\lambda\right)x}
\ge \left(\frac{\lambda_1-\lambda_0}{2}-\eps\right)\frac{\norm[\Hs]{x}}{2}\left(\frac{\norm[\Hs]{x}}{\norm[\Hs]{z}}+\frac{\norm[\Hs]{z}}{\norm[\Hs]{x}}\right)\ge \left(\frac{\lambda_1-\lambda_0}{2}-\eps\right)\norm[\Hs]{x}\,.
\end{align*}
Since $\eps>0$ was arbitrary, the gap condition \eqref{eq:Kreingap} holds.

By an application of the spectral theorem the same holds true for the extension $A_F$.

\end{remark}

We construct $A_F$ as an analogue to the Friedrichs extension of a semibounded operator (see e.g. \cite[Theorem VIII.15]{ReedSimon1} and \cite[Theorem X.23]{ReedSimon2} as well as \cite[pp. 224]{Birman1987}). 
We closely follow \cite{Dolbeault2000}, the main idea being the following. 
If $A$ is a bounded self-adjoint operator such that $F_\pm=\Hs_\pm$, then for any $E\notin\sigma(\Lambda_-A|_{\Hs_-})$ the decomposition
\begin{align}
\begin{pmatrix}
\Lambda_+A|_{\Hs_+}&\Lambda_+ A|_{\Hs_-}\\
\Lambda_-A|_{\Hs_+}&\Lambda_- A|_{\Hs_-} 
\end{pmatrix}    
-E\id
=\begin{pmatrix}
\id& -L_E^*\\
0&\id
\end{pmatrix}
\begin{pmatrix}
Q_E& 0\\
0&-(B+E)
\end{pmatrix}
\begin{pmatrix}
\id& 0\\
-L_E&\id
\end{pmatrix}
\label{eq:decomp}
\end{align}
holds (see e.g. \cite[Proposition 1.6.2]{Tretter2008}), where
\begin{align*}
B&=-\Lambda_-A|_{\Hs_-}\,,\\
L_E&=(B+E)^{-1}\Lambda_-A|_{\Hs_+}\,,\\
Q_E&=(\Lambda_+A-E)|_{\Hs_+}
+\Lambda_+A|_{\Hs_-}(B+E)^{-1}\Lambda_-A|_{\Hs_+}\,.
\end{align*}
The operator $Q_E$ is one of two Schur complements of $A$. 

In Section \ref{sec:proof}, we will construct $A_F$ by defining these three operators. The definition of $B$ in Subsection \ref{subsec:B} is straightforward and yields an operator with form domain denoted by $\Fs_-\subset\Hs_-$.  Complications arise from the fact that the Schur complement is only defined in terms of a quadratic form $q_E$ which is not necessarily closable on~$\Hs_+$. Thus a new Hilbert space $\Gs_+$, which is obtained when considering the closure $L_E$ of the operator $(B+E)^{-1}\Lambda_-A|_{F_+}$, has to be introduced in Subsection \ref{subsec:LE}. That $(B+E)^{-1}\Lambda_-A|_{F_+}$ is closable is non-trivial and does not seem to hold true without assumption $(iii)$. For this reason we believe $(iii)$ is necessary to guarantee that $\Gs_+$ can be identified with a subspace of $\Hs_+$. On $\Gs_+$, we can close $q_E$ and define the corresponding operator $Q_E$, as done in Subsection \ref{subsec:QE}. Particular consideration has to be given to the fact that the construction does not depend on the explicit choice of $E>\lambda_0$. In Subsection \ref{subsec:AF} the definition of the self-adjoint extension $A_F$ is given in a form that resembles the above decomposition \eqref{eq:decomp}. In Subsection \ref{subsec:var} the variational principle stated in Theorem \ref{th:main} will be proved.

Table \ref{tab:HS} summarises the Hilbert spaces that need to be defined while Table \ref{tab:S} lists all the additional spaces.

In Section \ref{sec:Dirac} we will apply Theorem \ref{th:main} to the Dirac--Coulomb operator.

In Section \ref{sec:APS} we will introduce the APS-boundary condition for generalised Dirac-operators and prove that the self-adjoint extension constructed according to Theorem~\ref{th:main} is exactly characterised by these boundary conditions. 

\begin{remark}
Our construction of the distinguished self-adjoint extension differs from \cite{Esteban2008}. Phrasing our assumptions in terms of the block-matrix notation \eqref{eq:block}, we do not require $P$ and $S$ to be self-adjoint nor that $Q=T^*$. In addition, we do not make any assumption about the domain of $QS^{-1}T$. With the setup as in \cite{Esteban2008} the quadratic form $q_E$ is closable on $\Hs_+$ and it is claimed (but not proved) in \cite{Esteban2008} that the domain of the closure is independent of $E$. In our construction the introduction of $\Gs_+$ is necessary to guarantee both that $q_E$ is closable on $\Gs_+$ and that the domain of the closure and hence the self-adjoint extension do not depend on the choice of $E$. Nevertheless our construction is inspired by the approach in \cite{Esteban2007} and \cite{Esteban2008}.

\end{remark}

\begin{table}[!ht]
\begin{center}
\begin{tabular}{|l|l|l|l|l|}
\hline
   Hilbert space  & (Equivalent) norms & Contained in & Description & Page \\
   \hline
   $\Hs$ &$\norm[\Hs]{\cdot}$&&&p.\pageref{def:Hs}\\
   $\Hs_+$  &$\norm[\Hs]{\cdot}$ & $\Hs_+\subset\Hs$ & $\Lambda_+\Hs$&p.\pageref{def:Hs+}\\
   $\Hs_-$  &$\norm[\Hs]{\cdot}$ & $\Hs_-\subset\Hs$ & $\Lambda_-\Hs$&p.\pageref{def:Hs-}\\
   $\Fs_-$  &$\norm[\Fs_-]{\cdot}$ & $\Fs_-\subset \Hs_-$& Form domain of $B$ &p.\pageref{def:Fs-}\\
   $\Gs_+$  &$\norm[E]{\cdot}, E>\lambda_0$ & $\Gs_+\subset\Hs_+$ & Domain of $L_E$&p.\pageref{def:Gs+}\\
   $\Fs_+$  &$\norm[\Fs_+,E]{\cdot}, E>\lambda_0$ & $\Fs_+\subset\Gs_+$ & Form domain of $Q_E$&p.\pageref{def:Fs+}\\
   \hline
\end{tabular}
\end{center}
\caption{The required Hilbert spaces}
\label{tab:HS}
\end{table}
\begin{table}[!ht]
\begin{center}
\begin{tabular}{|l|l|l|l|}
\hline
   Space  & Contained in & Description & Page \\
   \hline
   $D(A_F)$&$D(A_F)\subset \Fs_+\oplus\Hs_-$& Domain of $A_F$&p.\pageref{def:DAF}\\
   $D(A)$&$D(A)\subset D(A_F)$& Domain of $A$&p.\pageref{def:DA}\\
   $F_+$  & $F_+\subset\Fs_+$ & $\Lambda_+D(A)$&p.\pageref{def:F+}\\
   $F_-$  & $F_-\subset\Fs_-$ & $\Lambda_-D(A)$&p.\pageref{def:F-}\\
   \hline
\end{tabular}
\end{center}
\caption{The  additionally required vector spaces}
\label{tab:S}
\end{table}
\newpage
\section{The Proof of Theorem \ref{th:main}}\label{sec:proof}

\subsection{The Definition of $B$}\label{subsec:B}
We start by setting 
\begin{align*}
\sclp[F_-]{y_-,z_-}\coloneqq (\lambda_0+1)\sclp[\Hs]{y_-,z_-}-\sclp[\Hs]{y_-,Az_-}
\end{align*}
which by definition of $\lambda_0$ is an inner product on $F_-$ with corresponding norm
\begin{align*}
\norm[F_-]{y_-}^2=(\lambda_0+1)\norm[\Hs]{y_-}^2-\sclp[\Hs]{y_-,Ay_-}\,.
\end{align*}
Since the quadratic form  $\sclp[F_-]{\cdot,\cdot}$ comes from a symmetric operator it is closable, i.e. it extends to a closed quadratic form on the form domain $\Fs_-\subset \Hs_-$\label{def:Fs-}, which is the closure of $F_-$ with respect to the norm $\norm[F_-]{\cdot}$. 
If we denote the continuous extension of the quadratic form $\sclp[F_-]{\cdot,\cdot}$ to $\Fs_-$ by $\sclp[\Fs_-]{\cdot,\cdot}$, then $(\Fs_-,\sclp[\Fs_-]{\cdot,\cdot})$ forms a Hilbert space.

By assumption $\Lambda_-A |_{F_-}$ is essentially self-adjoint, hence there exists a unique self-adjoint extension given by its closure, which we will denote by $-B$. It is then clear that $B+\lambda_0+1$ coincides with the self-adjoint operator associated with the closed quadratic form $\sclp[\Fs_-]{\cdot, \cdot}$ such that
\begin{align*}
\sclp[\Fs_-]{y_-,z_-}=\sclp[\Hs]{y_-,Bz_-}+(\lambda_0+1)\sclp[\Hs]{y_-,z_-}.
\end{align*}
The form domain of $B$ is $\Fs_-$ and its operator domain $D(B)$ is a subset of $\Fs_-$. 
For $E> \lambda_0$ the self-adjoint operator $B+E$ is strictly positive and its inverse $(B+E)^{-1}$ is well-defined and bounded on all of $\Hs_-$. 

\begin{remark}\label{rem:Bhat}
Since $\Lambda_-A |_{F_-}$ is essentially self-adjoint, the operator $B$ coincides with the Friedrichs extension of the semi-bounded operator $-\Lambda_-A|_{F_-}$. For the convenience of the reader and to evoke connections to our construction, we recall the definition of this extension. Using Riesz' theorem we first define the operator $\widehat{P}$ as the isometric isomorphism between the Hilbert space $\Fs_-$ and its dual $\Fs_-'$, i.e. for any $z_-\in \Fs_-$ we define $\widehat{P}z_-\in \Fs_-'$ to be the unique continuous functional such that 
\begin{align*}
[\widehat{P} z_-](y_-)=\sclp[\Fs_-]{z_-,y_-}\,.
\end{align*}
With the embedding $j_-:\Hs_-\to\Fs_-'$ given by $[j_-(y_-)](z_-)=\sclp[\Fs_-]{y_-,z_-}$ (identifying $\Hs_-$ with its dual space $\Hs_-'$) we can show that on the domain
\begin{align*}
D(P)=\set{z_-\in \Fs_-\subset\Hs_-}{\widehat{P}z_-\in j_-(\Hs_-)}\subset\Hs_-
\end{align*}
the operator $P= j_-^{-1}\circ \widehat{P}$ is a self-adjoint extension of $-\Lambda_-A|_{F_-}+\lambda_0+1$. The Friedrichs extension $B$ is then defined as $B=P-\lambda_0-1$ with domain $D(B)=D(P)$. In particular, the quadratic form $\sclp[\Hs]{y_-,B z_-}$ has a continuous extension to all $x_-,y_-\in \Fs_-$ given by $[\widehat{B}(y_-)](z_-)$ where $\widehat{B}=\widehat{P}-(\lambda_0+1)j_-$. The form domain of $B$ is consequently $\Fs_-$.  
\end{remark}

\subsection{The Definition of $L_E$}\label{subsec:LE}

Let $E>\lambda_0$. Then for $x_+ \in F_+$ the mapping $x_+ \mapsto (B+E)^{-1}\Lambda_-Ax_+$ defines a linear operator from $F_+$ into $\Hs_-$.  The proof of the second part of the following lemma is adapted from \cite[Lemma 2.1]{Dolbeault2000}.

\begin{lemma}
\label{lem:Enormequiv}
The operator $(B+E)^{-1}\Lambda_- A$ defined on $F_+$ is closable. We denote its closure by $L_E$ with graph norm
\begin{equation*}
\norm[E]{x_+}^2
=\norm[\Hs]{x_++L_Ex_+}^2
=\norm[\Hs]{x_+}^2+\norm[\Hs]{L_Ex_+}^2\, .
\end{equation*}
For $\lambda_0 < E\leq E'$ the norms $\norm[E]{\cdot}$ and $\norm[E']{\cdot}$ are equivalent on $F_+$ with
\begin{equation}
\norm[\Hs]{x_+}\le \norm[E']{x_+}\le\norm[E]{x_+}\le C_{E,E'}\norm[E']{x_+}\, ,
\label{eq:Enormequiv} 
\end{equation}
where $C_{E,E'}=(E'-\lambda_0)/(E-\lambda_0) \geq 1$.
 \end{lemma}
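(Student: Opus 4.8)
The plan is to establish Lemma~\ref{lem:Enormequiv} in two stages: first the equivalence of the graph norms $\norm[E]{\cdot}$ on $F_+$, and then closability, since the latter follows cleanly once the former is in hand. Fix $x_+\in F_+$ and abbreviate $v_E \coloneqq (B+E)^{-1}\Lambda_- A x_+ \in \Hs_-$. The resolvent identity gives $v_{E} - v_{E'} = (E'-E)(B+E)^{-1}v_{E'}$, so I would start from the spectral calculus for the self-adjoint operator $B$: since $B+E \ge E - \lambda_0 > 0$ and $B + E' \ge E' - \lambda_0 > 0$, one has $0 < (B+E')^{-1} \le (B+E)^{-1} \le \frac{E'-\lambda_0}{E-\lambda_0}(B+E')^{-1}$ as bounded positive operators. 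Applying this to $\Lambda_- A x_+$ and taking inner products yields $\norm[\Hs]{v_{E'}} \le \norm[\Hs]{v_E} \le C_{E,E'}\norm[\Hs]{v_{E'}}$ with $C_{E,E'} = (E'-\lambda_0)/(E-\lambda_0)$. Adding $\norm[\Hs]{x_+}^2$ throughout and using $1 \le C_{E,E'}$ gives the chain \eqref{eq:Enormequiv}; the first inequality $\norm[\Hs]{x_+}\le\norm[E']{x_+}$ is immediate from the definition of the graph norm.

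For closability, suppose $(x_+^{(n)})$ is a sequence in $F_+$ with $x_+^{(n)}\to 0$ in $\Hs_+$ and $(B+E)^{-1}\Lambda_- A x_+^{(n)} \to w$ in $\Hs_-$; I must show $w=0$. Here is where assumption~(iii) enters, following \cite[Lemma~2.1]{Dolbeault2000}: test $w$ against $(B+E)\psi$ for $\psi$ in the essential-self-adjointness core, i.e. for $\psi \in F_-$ (or $D(B)$). One computes
\begin{align*}
\sclp[\Hs]{w,(B+E)\psi} = \lim_n \sclp[\Hs]{(B+E)^{-1}\Lambda_- A x_+^{(n)}, (B+E)\psi} = \lim_n \sclp[\Hs]{\Lambda_- A x_+^{(n)}, \psi}\,,
\end{align*}
using self-adjointness of $(B+E)^{-1}$ and $(B+E)$. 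Now $\sclp[\Hs]{\Lambda_- A x_+^{(n)},\psi} = \sclp[\Hs]{A x_+^{(n)},\psi} = \sclp[\Hs]{x_+^{(n)}, A\psi}$ by symmetry of $A$ (noting $\psi = \Lambda_-\psi$, so $\sclp[\Hs]{Ax_+^{(n)},\psi}=\sclp[\Hs]{Ax_+^{(n)},\Lambda_-\psi}=\sclp[\Hs]{\Lambda_-Ax_+^{(n)},\psi}$ and similarly one pushes the form across). Since $x_+^{(n)}\to 0$ in $\Hs$, this limit is $0$. Thus $w \perp (B+E)F_-$; because $\Lambda_- A|_{F_-}$ is essentially self-adjoint, $(B+E)$ restricted to the core $F_-$ has dense range in $\Hs_-$ (its closure $B+E$ is a positive self-adjoint operator, hence surjective onto $\Hs_-$, and the image of any core under $B+E$ is dense), so $w=0$. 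This proves closability.

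The only delicate point — and the step I expect to be the main obstacle — is the manipulation $\sclp[\Hs]{\Lambda_- A x_+^{(n)},\psi} = \sclp[\Hs]{x_+^{(n)},A\psi}$: it requires $\psi \in F_- = \Lambda_- D(A) \subset D(A)$ so that $A\psi$ is literally defined, and uses the symmetry of $A$ together with $x_+^{(n)}, \psi \in D(A)$ and $\Lambda_-^* = \Lambda_-$. Once one has closability, the displayed formula for $\norm[E]{x_+}^2$ is just the definition of the graph norm of the closure $L_E$ of the operator $x_+ \mapsto (B+E)^{-1}\Lambda_- A x_+$, and the norm equivalence \eqref{eq:Enormequiv}, already proved on the dense subspace $F_+$, extends by continuity to the respective closures — though strictly speaking the statement as given only asserts equivalence on $F_+$, so no extension is needed. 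I would present the norm-equivalence computation first (it is short and self-contained via spectral calculus), then the closability argument, and finally note that the graph-norm identity is definitional.
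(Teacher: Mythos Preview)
Your proof is correct and matches the paper's approach almost exactly: the closability argument via testing against $(B+E)F_-$ and invoking essential self-adjointness of $\Lambda_-A|_{F_-}$ is identical to the paper's, and the norm comparison is the same spectral-calculus estimate (the paper writes it as $\norm[\Hs]{(B+E)^{-1}(B+E')x}\le C_{E,E'}\norm[\Hs]{x}$ and applies it to $x=L_{E'}x_+$), only you present the two parts in the opposite order. One small imprecision: the operator inequality $(B+E')^{-1}\le(B+E)^{-1}\le C_{E,E'}(B+E')^{-1}$ does not by itself give $\norm[\Hs]{v_{E'}}\le\norm[\Hs]{v_E}\le C_{E,E'}\norm[\Hs]{v_{E'}}$ via ``taking inner products'' --- you need the squared inequality $(B+E')^{-2}\le(B+E)^{-2}\le C_{E,E'}^2(B+E')^{-2}$, which of course also follows from the spectral calculus you invoke.
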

 
 \begin{proof}
 We first show closability. Consider a sequence of $x_n\in F_+$ with $\norm[\Hs]{x_n}\to 0$ and $y\in \Hs_-$ with $\norm[\Hs]{(B+E)^{-1}\Lambda_-A x_n-y}\to0$. We have to show that $y=0$.
Let $z\in (B+E)F_-\subset\Hs_-$. Then
\begin{equation*}
\begin{split}
|\sclp[\Hs]{z,(B+E)^{-1} \Lambda_-A x_n}|=|\sclp[\Hs]{(B+E)^{-1}z,\Lambda_-Ax_n}|
&=|\sclp[\Hs]{A(B+E)^{-1}z,x_n}|\\
&\le \norm[\Hs]{A(B+E)^{-1}z}\norm[\Hs]{x_n}\to 0 \, .
\end{split}
\end{equation*}
Since $\Lambda_-A|_{F_-}$ is essentially self-adjoint, we can conclude that $(B+E)F_-$ is dense in $\Hs_-$ and thus $y=0$. 

Next, assume $\lambda_0< E \leq E'$. Then the first two inequalities in \eqref{eq:Enormequiv} follow directly from the definition of the norms. For a bound on $\norm[E]{\cdot}$ in terms of $\norm[E']{\cdot}$ we note that by the spectral theorem for $x\in \Fs_-$
\begin{align*}
\norm[\Hs]{({B}+E)^{-1}({B}+E')x}^2
\le \sup_{\lambda\ge-\lambda_0}\frac{|\lambda+E'|^2}{|\lambda+E|^2}\norm[\Hs]{x}^2
\le \frac{(E'-\lambda_0)^2}{(E-\lambda_0)^2}\norm[\Hs]{x}^2\,.
\end{align*}
As a consequence we obtain with $C_{E,E'}\coloneqq(E'-\lambda_0)/(E-\lambda_0)$ for any $x_+\in F_+$
\begin{align*}
\norm[\Hs]{L_{E}x_+}=\norm[\Hs]{(B+E)^{-1}\Lambda_-A x_+}
&\le C_{E,E'}\norm[\Hs]{({B}+E')^{-1}\Lambda_-A x_+}
=C_{E,E'}\norm[\Hs]{L_{E'} x_+}\, ,
\end{align*}
which proves \eqref{eq:Enormequiv}.
 \end{proof}
 
We conclude that the domain of $L_E$, meaning the closure of $F_+$ with respect to the norm $\norm[E]{\cdot}$, can be identified for all values of $E>\lambda_0$ and we will denote this vector space by $\Gs_+$ \label{def:Gs+}. Together with the inner product 
\begin{equation*}
\sclp[E]{x_+,z_+}\coloneqq\sclp[\Hs]{x_+,z_+}+\sclp[\Hs_-]{L_Ex_+,L_Ez_+}
\end{equation*}
it forms  a Hilbert space $(\Gs_+, \sclp[E]{\cdot, \cdot})$ and we have the vector space inclusions 
\begin{equation*}
F_+ \subset \Gs_+ \subset \Hs_+\, , 
\end{equation*}
where the last equation also holds in the sense of Hilbert spaces. 

Viewed as an operator from $(\Gs_+,\norm[E]{\cdot})$ to $(\Hs_-,\norm[\Hs]\cdot)$, $L_E$ is then bounded. We will later consider the $L_E$ as an operator on an even smaller Hilbert space, where it is consequently also bounded.

\subsection{The Definition of $Q_E$}\label{subsec:QE}

For $E> \lambda_0$ we now define the quadratic form $q_E$ on $F_+\times F_+$
\begin{align*}
q_E(x_+,z_+)\coloneqq \sclp[\Hs]{x_+,(A-E)z_+}+\sclp[\Hs]{\Lambda_-A x_+, (B+E)^{-1} \Lambda_- A z_+}\,.
\end{align*}
It is the quadratic form related to one of the Schur complements of the matrix representation of $A$. We will see that $q_E$ can be closed as a lower-semibounded form on the Hilbert space $\Gs_+$ such that the closure is independent of $E$. To this end we first derive the following result which can also be found in \cite[pp. 210] {Dolbeault2000}. 
\begin{lemma}
For $E>\lambda_0$ and $x_+ \in F_+$ let $\varphi_{E, x_+}: F_-\to \R$ be the function defined as
\begin{align*}
\varphi_{E,x_+}(y_-)\coloneqq\sclp[\Hs]{x_++y_-,A(x_++y_-)}-E\norm[\Hs]{x_++y_-}^2\,.
\end{align*}
The quadratic form $q_E$ is then related to $ \varphi_{E, x_+}$ by
\begin{align*}
q_E(x_+,x_+)=\sup_{y_-\in F_-}\varphi_{E,x_+}(y_-)\,.
\end{align*}
In particular, $\varphi_{E,x_+}(\cdot)$ can be extended to $\Fs_-$ and the extension attains its maximum at the unique point $y_{\max}= L_E x_+=(B+E)^{-1} \Lambda_- A x_+$. 
\end{lemma}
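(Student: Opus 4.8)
The plan is to expand $\varphi_{E,x_+}(y_-)$ into an expression in which the $y_-$-dependence is that of a strictly concave quadratic functional. Since $x_+\in\Hs_+$ and $y_-\in\Hs_-$ are orthogonal, $\norm[\Hs]{x_++y_-}^2=\norm[\Hs]{x_+}^2+\norm[\Hs]{y_-}^2$. Expanding the form and using that it is real and symmetric, together with $\Lambda_+Ax_+\perp y_-$ to rewrite the two cross terms as $2\sclp[\Hs]{\Lambda_-Ax_+,y_-}$, one gets $\sclp[\Hs]{x_++y_-,A(x_++y_-)}=\sclp[\Hs]{x_+,Ax_+}+2\sclp[\Hs]{\Lambda_-Ax_+,y_-}+\sclp[\Hs]{y_-,Ay_-}$; and for $y_-\in F_-$ one has $\Lambda_+Ay_-\perp y_-$ and $\Lambda_-Ay_-=-By_-$, so $\sclp[\Hs]{y_-,Ay_-}=-\sclp[\Hs]{y_-,By_-}$. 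Hence on $F_-$
\[
\varphi_{E,x_+}(y_-)=\sclp[\Hs]{x_+,(A-E)x_+}+2\sclp[\Hs]{\Lambda_-Ax_+,y_-}-\sclp[\Hs]{y_-,(B+E)y_-}\,.
\]

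Next I would pass to the closure $\Fs_-$. From the definition of the $\Fs_-$-inner product in Subsection~\ref{subsec:B} one checks that $\sclp[\Hs]{y_-,(B+E)y_-}=\norm[\Fs_-]{y_-}^2+(E-\lambda_0-1)\norm[\Hs]{y_-}^2$ for $y_-\in F_-$; together with $\norm[\Hs]{y_-}\le\norm[\Fs_-]{y_-}$ and Cauchy--Schwarz on the linear term, this shows $\varphi_{E,x_+}$ is continuous for $\norm[\Fs_-]{\cdot}$, hence extends (by the same formula) to a $\norm[\Fs_-]{\cdot}$-continuous functional on $\Fs_-$, and by density of $F_-$ in $\Fs_-$ one has $\sup_{y_-\in F_-}\varphi_{E,x_+}(y_-)=\sup_{y_-\in\Fs_-}\varphi_{E,x_+}(y_-)$. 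On $\Fs_-$ the extension reads $\varphi_{E,x_+}(y_-)=\sclp[\Hs]{x_+,(A-E)x_+}+2\sclp[\Hs]{\Lambda_-Ax_+,y_-}-b_E(y_-,y_-)$, where $b_E(y_-,z_-)\coloneqq\sclp[\Fs_-]{y_-,z_-}+(E-\lambda_0-1)\sclp[\Hs]{y_-,z_-}$ is the closed form of $B+E$; by the identity above and $\norm[\Hs]{\cdot}\le\norm[\Fs_-]{\cdot}$ it satisfies $b_E(w,w)\ge(E-\lambda_0)\norm[\Hs]{w}^2$, strictly positive for $w\neq0$ when $E>\lambda_0$ (this is the strict positivity of $B+E$ noted in Subsection~\ref{subsec:B}).

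Finally I would complete the square around $y_{\max}\coloneqq(B+E)^{-1}\Lambda_-Ax_+$, which equals $L_Ex_+$ by the definition of $L_E$ on $F_+$ and lies in $D(B)\subset\Fs_-$ because $(B+E)^{-1}$ maps $\Hs_-$ into $D(B)$. For $y_-\in\Fs_-$ one has $b_E(y_-,y_{\max})=\sclp[\Hs]{y_-,(B+E)y_{\max}}=\sclp[\Hs]{y_-,\Lambda_-Ax_+}$, so writing $y_-=y_{\max}+w$ the linear terms cancel and $\varphi_{E,x_+}(y_-)=\varphi_{E,x_+}(y_{\max})-b_E(w,w)$. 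Strict positivity of $b_E$ then forces $y_{\max}$ to be the unique maximiser, and inserting $(B+E)y_{\max}=\Lambda_-Ax_+$ gives
\[
\varphi_{E,x_+}(y_{\max})=\sclp[\Hs]{x_+,(A-E)x_+}+\sclp[\Hs]{\Lambda_-Ax_+,(B+E)^{-1}\Lambda_-Ax_+}=q_E(x_+,x_+)\,.
\]
I expect the only delicate point to be the form-theoretic bookkeeping of the last two steps: one must manipulate $B+E$ through its closed form $b_E$ on $\Fs_-$ (as recalled in Remark~\ref{rem:Bhat}) and verify that $y_{\max}$ lies in its form domain, rather than treating $B+E$ naively as an operator on $F_-$; the algebra itself is a routine completion of squares.
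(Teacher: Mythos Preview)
Your proof is correct and essentially matches the paper's: both expand $\varphi_{E,x_+}$ as a constant plus a linear term minus the $(B+E)$-form, extend continuously to $\Fs_-$, and locate the unique maximiser $(B+E)^{-1}\Lambda_-Ax_+$ via strict positivity of $B+E$ (the paper phrases this last step as a concavity/Euler-equation argument, you as an explicit completion of the square, which are equivalent). One small slip: in the complex setting the cross term is $2\Re\sclp[\Hs]{\Lambda_-Ax_+,y_-}$ rather than $2\sclp[\Hs]{\Lambda_-Ax_+,y_-}$ (the paper writes $2\Re\sclp[\Hs]{y_-,Ax_+}$), but with this correction your argument goes through unchanged.
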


\begin{proof}
For $y_- \in F_-$ we write 
\begin{align}
\varphi_{E,x_+}(y_-)=\sclp[\Hs]{x_+, (A-E)x_+}+2\Re\sclp[\Hs]{y_-,Ax_+}-\sclp[\Hs]{y_-, (B+E)y_-}.
\label{eq:phicls} 
\end{align}
It is then clear that the functional $\varphi_{E, x_+}(\cdot)$ naturally extends to $\Fs_-$, see also Remark~\ref{rem:Bhat}. We denote the continuous extension by $\overline{\varphi}_{E, x_+}$. 

The quadratic polynomial $f:\R\to\R$ which we can define for any $y_-, z_- \in \Fs_-$ as
\begin{align*}
f(h)= \overline{\varphi}_{E,x_+}(y_-+h(z_--y_-))\,, h\in\R
\end{align*} 
is strictly concave and thus we have
\begin{align}
    f(1)< f(0)+f'(0)\, .
    \label{eq:concave}
\end{align}
Now assume that $y_- \in \Fs_-$ satisfies the Euler equation, that is
\begin{align*}
\overline{\varphi}_{E,x_+}'(y_-;(z_--y_-))=0
\end{align*}
for all $z_- \in \Fs_-$. Then we must have $\sclp[\Hs]{w_-, \Lambda_-Ax_+-(B+E)y_-}=0$ for all $w_- \in \Fs_-$ or equivalently
\begin{align}
y_-=(B+E)^{-1}\Lambda_-Ax_+
\label{eq:ymax} 
\end{align}
and by \eqref{eq:concave} for all $z_- \in \Fs_-$, $z_-\neq y_-$
\begin{align*}
\overline{\varphi}_{E,x_+}(z_-)< \overline{\varphi}_{E,x_+}(y_-),
\end{align*}
i.e.\ $\overline{\varphi}_{E,x_+}(\cdot)$ has a unique global maximum at the point $y_{\max}= (B+E)^{-1} \Lambda_- A x_+ \in D(B)$. 
Inserting \eqref{eq:ymax} into \eqref{eq:phicls} we obtain
\begin{equation*}
\begin{split}
\overline{\varphi}_{E,x_+}(y_{\max})=\sclp[\Hs]{x_+,(A-E)x_+}+\sclp[\Hs]{\Lambda_-Ax_+, L_E x_+}\, .
\end{split}
\end{equation*}
\end{proof}

The following lemma establishes important properties of the quadratic form $q_E$.  The proof can also be found in \cite[Lemma 2.1]{Dolbeault2000}.

\begin{lemma}\label{lem:equiv}
Let $\lambda_0<E\le E'$. 
\begin{enumerate}[(i)]
\item On $F_+$ the quadratic forms $q_E$ and $q_{E'}$ satisfy
\begin{align*}
q_{E'}(x_+,x_+)+(E'-E)\norm[E']{x_+}^2\le q_E(x_+,x_+)\le q_{E'}(x_+,x_+)+(E'-E)\norm[E]{x_+}^2\,.
\end{align*}
\item The quadratic form $q_E$ is bounded from below on $F_+\subset\Gs_+$ by a constant $\kappa_E\ge 1$ such that
\begin{align*}
q_{E}(x_+,x_+)+\kappa_E\sclp[E]{x_+,x_+}\ge \sclp[E]{x_+,x_+}\,.
\end{align*}
\item It holds that 
\begin{align*}
E<\lambda_1&&\text{  if and only if }&& q_E(x_+,x_+)>0 \text{ for all } x_+\in F_+\setminus\{0\}\,,\\
E\le\lambda_1&&\text{  if and only if }&& q_E(x_+,x_+)\ge0 \text{ for all } x_+\in F_+\setminus\{0\}\,.
\end{align*}
\end{enumerate}

\end{lemma}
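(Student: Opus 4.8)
The plan is to derive all three parts from two facts established just above: the variational identity $q_E(x_+,x_+)=\sup_{y_-\in F_-}\varphi_{E,x_+}(y_-)$, and the fact that the continuous extension $\overline\varphi_{E,x_+}$ attains its maximum over $\Fs_-$ at the single point $L_Ex_+$; the graph-norm comparison of Lemma~\ref{lem:Enormequiv} supplies the quantitative bookkeeping. For part~(i) the starting point is the elementary identity $\overline\varphi_{E,x_+}(y_-)-\overline\varphi_{E',x_+}(y_-)=(E'-E)\norm[\Hs]{x_++y_-}^2$, valid for every $y_-\in\Fs_-$. Inserting the maximiser $y_-=L_{E'}x_+$ of $\overline\varphi_{E',x_+}$ into $q_E(x_+,x_+)=\max_{\Fs_-}\overline\varphi_{E,x_+}$ and using $\overline\varphi_{E',x_+}(L_{E'}x_+)=q_{E'}(x_+,x_+)$ and $\norm[\Hs]{x_++L_{E'}x_+}^2=\norm[E']{x_+}^2$ yields the left inequality; starting instead from $q_E(x_+,x_+)=\overline\varphi_{E,x_+}(L_Ex_+)=\overline\varphi_{E',x_+}(L_Ex_+)+(E'-E)\norm[\Hs]{x_++L_Ex_+}^2$ and bounding $\overline\varphi_{E',x_+}(L_Ex_+)\le q_{E'}(x_+,x_+)$ yields the right one, since $\norm[\Hs]{x_++L_Ex_+}^2=\norm[E]{x_+}^2$.

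For part~(ii), assumption~$(ii)$ lets me fix $E_0$ with $\lambda_0<E_0<\min\{E,\lambda_1\}$. Because $E_0<\lambda_1$, the definition of $\lambda_1$ supplies, for each $x_+\in F_+\setminus\{0\}$, some $y_-\in F_-$ with $\sclp[\Hs]{x_++y_-,A(x_++y_-)}>E_0\norm[\Hs]{x_++y_-}^2$, hence $q_{E_0}(x_+,x_+)>0$. Feeding this into the right-hand inequality of part~(i) for the pair $E_0\le E$, then applying $\norm[E_0]{x_+}\le C_{E_0,E}\norm[E]{x_+}$ from Lemma~\ref{lem:Enormequiv}, gives $q_E(x_+,x_+)>-(E-E_0)C_{E_0,E}^{\,2}\,\sclp[E]{x_+,x_+}$ for $x_+\neq0$ (with equality at $x_+=0$), so the assertion holds with $\kappa_E:=1+(E-E_0)C_{E_0,E}^{\,2}\ge1$.

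For part~(iii), fix $x_+\in F_+\setminus\{0\}$. From $q_E(x_+,x_+)=\sup_{y_-\in F_-}\bigl(\sclp[\Hs]{x_++y_-,A(x_++y_-)}-E\norm[\Hs]{x_++y_-}^2\bigr)$ and $\norm[\Hs]{x_++y_-}>0$ one reads off that $q_E(x_+,x_+)>0$ is equivalent to $\sup_{y_-\in F_-}\sclp[\Hs]{x_++y_-,A(x_++y_-)}/\norm[\Hs]{x_++y_-}^2>E$, and likewise with $\ge$; for the latter it helps to note that this supremum is in fact attained over $\Fs_-$, which follows from a weak-compactness argument since the gap $\lambda_1>\lambda_0$ forces its value to exceed $\lambda_0$, keeping the near-optimal vectors in a fixed ball of $\Fs_-$ on which the $B$-form in the numerator dominates. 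Passing to the infimum over $x_+$ and invoking the definition of $\lambda_1$ now gives the forward implications $E<\lambda_1\Rightarrow q_E>0$ and $E\le\lambda_1\Rightarrow q_E\ge0$ on $F_+\setminus\{0\}$, and the reverse $q_E\ge0$ on $F_+\setminus\{0\}\Rightarrow E\le\lambda_1$, which completes the non-strict equivalence. The remaining implication is the strict converse $q_E>0$ on $F_+\setminus\{0\}\Rightarrow E<\lambda_1$: for $E>\lambda_1$ the infimum definition of $\lambda_1$ produces an $x_+$ with $\sup_{y_-}(\cdots)<E$, hence $q_E(x_+,x_+)<0$; and the borderline $E=\lambda_1$ is excluded by showing that $q_{\lambda_1}\ge0$ on $F_+\setminus\{0\}$ must possess a nontrivial zero, equivalently that the infimum defining $\lambda_1$ is attained, which I would obtain from a minimising sequence $x_+^{(n)}$, using part~(i) together with the control on $L_{\lambda_1}x_+^{(n)}$ coming from $q_{\lambda_1}(x_+^{(n)},x_+^{(n)})>0$ to pass to a weak limit.

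This last step is the one I expect to be the main obstacle — upgrading the pointwise strict positivity of the form $q_E$ on $F_+$ to the strict spectral-gap inequality $E<\lambda_1$ — because positivity of a quadratic form at every individual vector need not survive an arbitrarily small downward shift of $E$. Carrying it through requires both the strict monotonicity of $E\mapsto q_E(x_+,x_+)$ from part~(i) and enough compactness (from the coercivity of $\overline\varphi_{E,x_+}$ on $\Fs_-$ and the gap assumption) to promote a minimising sequence for $\lambda_1$ to an actual minimiser.
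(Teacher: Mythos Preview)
Your argument for (i) is correct and in fact cleaner than the paper's: instead of the resolvent identity
\[
q_{\lambda'}(x_+,x_+)=q_\lambda(x_+,x_+)+(\lambda-\lambda')\bigl[\norm[\Hs]{x_+}^2+\sclp[\Hs]{\Lambda_-Ax_+,(B+\lambda)^{-1}(B+\lambda')^{-1}\Lambda_-Ax_+}\bigr]
\]
combined with $(B+E')^{-1}\le(B+E)^{-1}$, you exploit the variational characterisation $q_E(x_+,x_+)=\max_{\Fs_-}\overline\varphi_{E,x_+}$ together with the exact shift $\overline\varphi_{E,x_+}(y_-)-\overline\varphi_{E',x_+}(y_-)=(E'-E)\norm[\Hs]{x_++y_-}^2$, evaluated at the two maximisers $L_Ex_+$ and $L_{E'}x_+$. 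Part (ii) is also correct and close in spirit to the paper; the paper anchors at $E_0=\lambda_1$ (after first proving (iii)), whereas you anchor at a general $E_0\in(\lambda_0,\min\{E,\lambda_1\})$, which avoids the forward reference.

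For (iii), your treatment of the three implications $E<\lambda_1\Rightarrow q_E>0$, $E\le\lambda_1\Rightarrow q_E\ge0$, and $q_E\ge0\Rightarrow E\le\lambda_1$ is correct and essentially matches the paper. The remaining direction you single out as the obstacle---$q_E>0$ on $F_+\setminus\{0\}\Rightarrow E<\lambda_1$---is where your proposal breaks down. Your compactness argument (extracting a minimiser of $x_+\mapsto E(x_+)$ via weak limits) cannot work: $F_+=\Lambda_+D(A)$ is not closed in any of the relevant topologies, so a weak limit of a minimising sequence need not lie in $F_+$. A concrete counterexample within the hypotheses of the theorem is the free Dirac operator $H_0$ with Talman projections and $D(A)=\mathcal C_0^\infty(\R^3;\C^4)$: here $\lambda_1=1$ and
\[
q_1(\psi,\psi)=\tfrac12\int_{\R^3}|\nabla\psi|^2\dd x>0\qquad\text{for every }0\neq\psi\in\mathcal C_0^\infty(\R^3;\C^2),
\]
so the infimum defining $\lambda_1$ is not attained and the strict converse is simply false. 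The paper's own proof does not establish this implication either---it proves exactly the three implications above and then stops---so the strict ``if and only if'' is stated more strongly than what is actually proved (and, per this example, than what holds under the stated hypotheses). Only the forward strict direction and the full non-strict equivalence are used later.
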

\begin{proof}

For $(i)$ we use the resolvent identity to compute for any $\lambda,\lambda'>\lambda_0$,
\begin{align*}
q_{\lambda'}(x_+,x_+)
\!=\!q_\lambda (x_+,x_+)\!+\!(\lambda-\lambda')\left[\norm[\Hs]{x_+}^2\!+\!\sclp[\Hs]{\Lambda_-Ax_+,(B+\lambda)^{-1}(B+\lambda')^{-1}\Lambda_-Ax_+}\right].
\end{align*}
The result then follows by setting $\lambda=E$, $\lambda'=E'$ and $\lambda=E'$, $\lambda'=E$, respectively, and using $(B+E')^{-1}\leq (B+E)^{-1}$ to bound the last term.

We continue to show $(iii)$ and subsequently $(ii)$. First, let $\lambda_0<E<\lambda_1$ and $x_+ \in F_+\setminus\{0\}$ arbitrary. By the definition of $\lambda_1$, for any $\eps>0$ there exists a $y_-^\eps\in F_-$ such that
\begin{align*}
\frac{\sclp[\Hs]{x_++y_-^\eps, A(x_++y_-^\eps)}}{\norm[\Hs]{x_++y_-^\eps}^2}\ge \lambda_1-\eps
\end{align*}
and consequently 
\begin{align*}
q_E(x_+,x_+)
&\ge \varphi_{E,x_+}(y_-^\eps)
=\sclp[\Hs]{x_++y_-^\eps,A(x_++y_-^\eps)}-E\norm[\Hs]{x_++y_-^\eps}^2\\
&\ge (\lambda_1-\eps-E)\norm[\Hs]{x_++y_-^\eps}^2\,.
\end{align*}
We can conclude that for all  $x_+\in F_+\setminus\{0\}$ and all $\lambda_0<E<\lambda_1$
\begin{align*}
q_E(x_+,x_+)\ge(\lambda_1-E)\norm[\Hs]{x_+}^2>0\,.
\end{align*}
Setting $E'\coloneqq\lambda_1$ and using $(i)$ we have that 
\begin{align*}
    q_{\lambda_1}(x_+,x_+)\geq q_E(x_+,x_+)-(\lambda_1-E)\norm[E]{x_+}^2\, ,
\end{align*}
which in the case $E\to \lambda_1$ shows $q_{\lambda_1}\geq 0$ since $\norm[E]{x_+}\to \norm[\lambda_1]{x_+}$ by \eqref{eq:Enormequiv}.

If $E>\lambda_1$ then again by definition of $\lambda_1$, for any $\eps>0$ with $\lambda_1<E-\eps$ there exists $x_+^\eps\in F_+\setminus\{0\}$ with
\begin{align*}
\frac{\sclp[\Hs]{(x_+^\eps+y_-),A(x_+^\eps+y_-)}}{\norm[\Hs]{x_+^\eps+y_-}^2}\le(E-\eps)
\end{align*}
for all $y_-\in F_-$ and consequently
\begin{align*}
q_E(x_+^\eps,x_+^\eps)=\sup_{y_-\in F_-}\varphi_{E,x_+^\eps}(y_-)\le -\inf_{y_-\in F_-}\eps\norm[\Hs]{x_+^\eps+y_-}^2\le -\eps\norm[\Hs]{x_+^\eps}^2
\end{align*}
which finishes the proof of $(iii)$. 

The statement in $(ii)$ is clear for $E\leq \lambda_1$. If $E>\lambda_1$ we use \eqref{eq:Enormequiv} to compute that 
\begin{align*}
    q_E(x_+,x_+)\geq q_{\lambda_1}(x_+,x_+)-(E-\lambda_1)\norm[\lambda_1]{x_+}^2\geq q_{\lambda_1}(x_+,x_+)-(E-\lambda_1)C_{\lambda_1,E}\norm[E]{x_+}^2 \, .
\end{align*}
Choosing $\kappa_{E}=1+\max(0,(E-\lambda_1)C_{\lambda_1,E})$ then gives the result.

\end{proof}

\begin{remark}
In the first part of the proof we needed bounds on terms of the type $\norm[\Hs]{L_Ex_+}$. This was done by using the new norm $\norm[E]{\cdot}$. Without specifying further assumptions on the operator $A$ it is not possible to estimate the difference between quadratic forms $q_E$ for different values of $E$ by a $\norm[\Hs]{\cdot}$-norm only. Introducing the Hilbert space $\Gs_+$ thus turns out to be essential.
\end{remark}

An immediate consequence of Lemma \ref{lem:equiv} together with Lemma \ref{lem:Enormequiv} is that the completion $\Fs_+$\label{def:Fs+} of $F_+$ with respect to the norm $\norm[F_+,E]{\cdot}$ induced by the inner product 
\begin{align*}
\sclp[F_+,E]{x_+,z_+}=q_E(x_+,z_+)+\kappa_E\sclp[E]{x_+,z_+}
\end{align*}
is independent of $E$.  In the remainder we fix $E>\lambda_0$ and denote the extension of the inner product $\sclp[F_+,E]{\cdot,\cdot}$ to $\Fs_+$ by $\sclp[\Fs_+, E]{\cdot,\cdot}$. 
A priori, it is not clear that $\Fs_+$ is a subspace of $\Hs_+$. However, the following holds.
 
 \begin{lemma}
The semibounded quadratic form $q_E$ is closable on the Hilbert space $\Gs_+$ for $E>\lambda_0$. The closure $\overline{q_E}$ has the form domain $\Fs_+$, independent of $E$, and can be identified with a subspace of $\Gs_+$ and subsequently also of $\Hs_+$.
\end{lemma}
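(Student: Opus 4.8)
The plan is to establish closability through the standard completion picture and then read off the remaining assertions. By Lemma~\ref{lem:equiv}$(ii)$ the sesquilinear form $Q_E(x_+,z_+)\coloneqq q_E(x_+,z_+)+\kappa_E\sclp[E]{x_+,z_+}$ dominates $\sclp[E]{\cdot,\cdot}$ on $F_+$, hence is a genuine (positive definite) inner product there, with associated norm $\norm[F_+,E]{\cdot}$ and completion the abstract Hilbert space $\Fs_+$. Since $\norm[E]{\cdot}\le\norm[F_+,E]{\cdot}$ and $F_+$ is dense in both $\Gs_+$ and $\Fs_+$, there is a canonical contraction $\Fs_+\to\Gs_+$ equal to the identity on $F_+$. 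Closability of $q_E$ on $\Gs_+$ is precisely the injectivity of this map; once it holds, the range of the map is the form domain of the closure $\overline{q_E}$, which is thereby identified with a subspace of $\Gs_+$, and, using the inclusion $\Gs_+\subset\Hs_+$ from Subsection~\ref{subsec:LE}, with a subspace of $\Hs_+$.

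So the crux is to take a $\norm[F_+,E]{\cdot}$-Cauchy sequence $(x_n)\subset F_+$ with $\norm[E]{x_n}\to0$ and show $q_E(x_n,x_n)\to0$ (equivalently $\norm[F_+,E]{x_n}\to0$). For fixed $m$ I would split $q_E(x_n,x_n)=q_E(x_n,x_n-x_m)+q_E(x_n,x_m)$. The first term is controlled by Cauchy--Schwarz for $Q_E$ together with the Cauchy property: writing $M\coloneqq\sup_n\norm[F_+,E]{x_n}<\infty$ one has $|Q_E(x_n,x_n-x_m)|\le M\norm[F_+,E]{x_n-x_m}$ and $|\sclp[E]{x_n,x_n-x_m}|\le 2M\norm[E]{x_n}$, so $|q_E(x_n,x_n-x_m)|$ becomes small once $m$ and then $n$ are large. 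For the second term I would use the explicit expression $q_E(x_n,x_m)=\sclp[\Hs]{x_n,(A-E)x_m}+\sclp[\Hs]{L_Ex_n,\Lambda_-Ax_m}$: for fixed $m$ both $(A-E)x_m\in\Hs$ and $\Lambda_-Ax_m\in\Hs_-$ are fixed vectors, whereas $x_n\to0$ in $\Hs$ and $L_Ex_n\to0$ in $\Hs_-$ since $\norm[E]{x_n}\to0$; hence $q_E(x_n,x_m)\to0$ as $n\to\infty$. Letting first $n\to\infty$ and then $m\to\infty$ yields $q_E(x_n,x_n)\to0$, proving closability.

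The $E$-independence is then immediate: the remark preceding the lemma already records that the norms $\norm[F_+,E]{\cdot}$ are pairwise equivalent on $F_+$, so the abstract completion $\Fs_+$ does not depend on $E$; and since for every $E$ the embedding $\Fs_+\hookrightarrow\Gs_+$ restricts to the identity on the dense subspace $F_+$, the images coincide, so $\Fs_+$ as a subspace of $\Gs_+$, and hence of $\Hs_+$, is likewise $E$-independent. The form domain of $\overline{q_E}$ is then this subspace, and closability of $q_E$ on $\Gs_+$ gives that $\overline{q_E}$ is indeed closed with form domain $\Fs_+$.

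I expect the one genuinely load-bearing point to be the vanishing of the cross-term $q_E(x_n,x_m)$ for fixed $m$. This is exactly where the passage from $\Hs_+$ to the smaller space $\Gs_+$ does its work: it is on $\Gs_+$ that $L_E$ is bounded, so that $\norm[E]{x_n}\to0$ forces $L_Ex_n\to0$; on $\Hs_+$ one has no such control and $q_E$ need not be closable at all. Everything else is routine manipulation of semibounded quadratic forms.
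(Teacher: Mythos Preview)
Your argument is correct and is essentially the paper's proof in different bookkeeping: the paper shows $\sclp[\Fs_+,E]{z,x_n}\to 0$ for every fixed $z\in F_+$ by exactly the same mechanism you use on the cross-term $q_E(x_n,x_m)$ (the explicit formula plus $\norm[E]{x_n}\to 0$ forcing both $x_n\to 0$ in $\Hs$ and $L_Ex_n\to 0$ in $\Hs_-$), and then invokes density of $F_+$ in $\Fs_+$ together with the Cauchy property, which is the same two-step $m,n\to\infty$ reasoning you spell out. Your emphasis that it is precisely the $\norm[E]{\cdot}$-convergence (and not merely $\norm[\Hs]{\cdot}$-convergence) that makes the cross-term vanish is also exactly the point the paper highlights.
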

\begin{proof}
We show that the positive form $\sclp[E]{\cdot, \cdot}=q_E(\cdot, \cdot)+\kappa_E\sclp[E]{\cdot,\cdot}$ is closable. 
Consider a sequence $x_n\in F_+$ which is a Cauchy sequence with respect to $\norm[\Fs_+,E]{\cdot}$ and which satisfies $\norm[E]{x_n}\to0$. Then for any $z\in F_+$
\begin{align*}
|\sclp[\Fs_+,E]{z,x_n}|
&\le \kappa_E|\sclp[E]{z,x_n}|+|\sclp[\Hs]{z,(A-E)x_n}|+|\sclp[\Hs]{\Lambda_-A z, L_Ex_n}|\\
&=\kappa_E |\sclp[E]{z,x_n}|+|\sclp[\Hs]{(A-E)z,x_n}|+|\sclp[\Hs]{\Lambda_-A z, L_Ex_n}|\\
&\le \kappa_E \left(\norm[E]{z}+\norm[\Hs]{(A-E)z}+\norm[\Hs]{\Lambda_-Az}\right)\norm[E]{x_n}
\end{align*}
and thus $\sclp[\Fs_+,E]{z,x_n}\to0$, where we again crucially need the assumption that $x_n\to 0$ in the $\norm[E]{\cdot}$-norm, which is stronger than the $\norm[\Hs]{\cdot}$-norm. Since $F_+$ is dense in $\Fs_+$ with respect to $ \norm[\Fs_+,E]{\cdot}$, we can conclude that $ \norm[\Fs_+,E]{x_n}\to 0$.
\end{proof}

We thus have the Hilbert space inclusions
\begin{equation*}
\Fs_+ \subset \Gs_+ \subset \Hs_+\, , 
\end{equation*}
with their respective inner products implicit, and the corresponding inclusions of the associated dual spaces 
\begin{equation*}
\Hs'_+ \subset \Gs'_+ \subset \Fs'_+\, .
\end{equation*}
By Riesz' theorem there exists an isometric isomorphism $i_{X\to X'}(x)=\sclp[X]{x, \cdot}$ between each Hilbert space $X$ and its dual space $X'$. In general we will not explicitly write the isomorphisms $i_{\Hs_{\pm}\to \Hs_{\pm}'}$ thus identifying $\Hs$ and its dual space $\Hs'$. 

Furthermore for each of the Hilbert space inclusions $X\subset Y$ there is a corresponding embedding of dual spaces $j_{Y' \to X'}: Y' \to X'$, in the sense
\begin{equation*}
[j_{Y' \to X'} \ell ](x)=\ell (x)=\sclp[Y]{i^{-1}_{Y\to Y'} \ell, x}\, , \hspace{.5cm}  \ell \in Y',\ x\in X \, .
\end{equation*}
All these embeddings are bounded in norm by one. 

Associated to the closed quadratic form  $\overline{q_E}$ there is an operator defined on all of the form domain $\Fs_+$, as well as a self-adjoint operator with domain a subset of $\Fs_+$. 

\begin{lemma}
Let $E >\lambda_0$. There exists an operator $\widehat{Q_E}:\Fs_+\subset\Gs_+\to \Fs_+'$ with the following properties.
\begin{enumerate}[(i)]
\item  For all $x_+, z_+\in \Fs_+$ the closure $\overline{q_E}$ of $q_E$ on $\Fs_+$ is given by
\begin{align*}
\overline{q_E}(x_+,z_+)=[\widehat{Q_E}x_+](z_+).
\end{align*}
\item The operator $\widehat{Q_E}$ is bounded and if additionally $E<\lambda_1$ then its inverse $\widehat{Q_E}^{-1}$ is also bounded. 
\item On the dense domain
\begin{align*}
D(Q_E)=\set{z_+\in \Fs_+}{\widehat{Q_E}z_+\in j_{\Gs_+'\to\Fs_+'}(\Gs_+')}\subset\Gs_+
\end{align*}
the operator
\begin{align*}
Q_E\coloneqq i_{\Gs_+\to\Gs_+'}^{-1}\circ j_{\Gs_+'\to\Fs_+'}^{-1}\circ \widehat{Q_E}: D(Q_E)\to \Gs_+
\end{align*}
is self-adjoint and $Q_E+\kappa_E\ge 1$. If additionally $E<\lambda_1$ then $Q_E$ is also positive.
\end{enumerate}
\end{lemma}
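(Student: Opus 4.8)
The plan is to realize $\widehat{Q_E}$ via the Riesz representation of the closed quadratic form $\overline{q_E}$ relative to the inner product $\sclp[\Fs_+,E]{\cdot,\cdot}$, following exactly the template of Remark \ref{rem:Bhat}. First I would define, using Riesz' theorem on the Hilbert space $(\Fs_+,\sclp[\Fs_+,E]{\cdot,\cdot})$, the isometric isomorphism $i_{\Fs_+\to\Fs_+'}$ and set $\widehat{Q_E}\coloneqq i_{\Fs_+\to\Fs_+'}-\kappa_E\, j_{\Gs_+'\to\Fs_+'}\circ i_{\Gs_+\to\Gs_+'}$ as a map $\Fs_+\to\Fs_+'$; then for $x_+,z_+\in F_+$ one has $[\widehat{Q_E}x_+](z_+)=\sclp[\Fs_+,E]{x_+,z_+}-\kappa_E\sclp[E]{x_+,z_+}=q_E(x_+,z_+)$ by the very definition of the $\Fs_+$-inner product, and since both sides are continuous on $\Fs_+\times\Fs_+$ (the left by boundedness of the Riesz maps and embeddings, the right because it \emph{is} $\overline{q_E}$) this identity extends by density of $F_+$ to all of $\Fs_+$, giving $(i)$.

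For $(ii)$, boundedness of $\widehat{Q_E}:\Fs_+\to\Fs_+'$ is immediate from the triangle inequality and the fact that all the Riesz isomorphisms are isometries and all the dual-space embeddings have norm at most one. For the boundedness of the inverse when $E<\lambda_1$, I would use Lemma \ref{lem:equiv}$(iii)$, which gives $q_E(x_+,x_+)>0$ on $F_+\setminus\{0\}$, together with part $(i)$ of that lemma: specifically, for $E<\lambda_1$ pick $E<E'<\lambda_1$; then $q_E(x_+,x_+)\ge q_{E'}(x_+,x_+)+(E'-E)\norm[E]{x_+}^2\ge (E'-E)\norm[E]{x_+}^2$ on $F_+$, which closes to $\overline{q_E}(x_+,x_+)\ge(E'-E)\norm[E]{x_+}^2$ on $\Fs_+$ by a density/limit argument; combined with $q_E(x_+,x_+)\ge(\lambda_1-E)\norm[\Hs]{x_+}^2$ (also from the proof of Lemma \ref{lem:equiv}) one obtains coercivity $\overline{q_E}(x_+,x_+)\ge c\,\norm[\Fs_+,E]{x_+}^2$ for some $c>0$, whence $\widehat{Q_E}$ is bijective with bounded inverse by Lax--Milgram.

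For $(iii)$, the argument is the standard passage from the form-level operator $\widehat{Q_E}$ to the Hilbert-space operator via the embedding $j_{\Gs_+'\to\Fs_+'}$, mirroring the construction of $P$ in Remark \ref{rem:Bhat}. On $D(Q_E)=\set{z_+\in\Fs_+}{\widehat{Q_E}z_+\in j_{\Gs_+'\to\Fs_+'}(\Gs_+')}$ the composition $Q_E=i_{\Gs_+\to\Gs_+'}^{-1}\circ j_{\Gs_+'\to\Fs_+'}^{-1}\circ\widehat{Q_E}$ is well-defined; symmetry follows since $\sclp[E]{Q_E x_+,z_+}=[\widehat{Q_E}x_+](z_+)=\overline{q_E}(x_+,z_+)$ is symmetric for $x_+,z_+\in D(Q_E)$; and the identity $(Q_E+\kappa_E)x_+=i_{\Gs_+\to\Gs_+'}^{-1}\circ j^{-1}\circ i_{\Fs_+\to\Fs_+'}x_+$ shows $Q_E+\kappa_E$ is, up to the isometry $i_{\Fs_+\to\Fs_+'}$, the inverse of the bounded injection $\Fs_+\hookrightarrow\Gs_+$ composed with adjoints — concretely it is a bijection of $D(Q_E)$ onto $\Gs_+$ with bounded inverse, hence self-adjoint as a symmetric operator with a self-adjoint (indeed bounded, everywhere-defined) inverse; the bounds $Q_E+\kappa_E\ge1$ and, when $E<\lambda_1$, $Q_E\ge0$ transfer directly from Lemma \ref{lem:equiv}$(ii)$ and $(iii)$ via $\sclp[E]{(Q_E+\kappa_E)x_+,x_+}=\overline{q_E}(x_+,x_+)+\kappa_E\norm[E]{x_+}^2\ge\norm[E]{x_+}^2$. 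Density of $D(Q_E)$ in $\Gs_+$ follows because $\Fs_+$ is dense in $\Gs_+$ and $D(Q_E)$ is dense in $\Fs_+$ in the $\norm[\Fs_+,E]{\cdot}$-norm (the usual form-domain-is-a-core argument: $(Q_E+\kappa_E)^{-1}\Gs_+\subset D(Q_E)$ and $(Q_E+\kappa_E)^{-1}$ has dense range in $\Fs_+$). The main obstacle I anticipate is purely bookkeeping rather than conceptual: keeping the chain of identifications $\Hs_+'\subset\Gs_+'\subset\Fs_+'$ and the isomorphisms $i_{X\to X'}$ straight so that $Q_E$ genuinely lands in $\Gs_+$ (not merely in $\Fs_+'$) exactly on the stated domain, and verifying that the self-adjointness criterion (symmetric with bounded self-adjoint inverse defined on all of $\Gs_+$) applies — all of which is routine once the coercivity in $(ii)$ is in hand.
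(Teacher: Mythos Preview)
Your proposal is correct and takes essentially the same approach as the paper: the paper's proof simply defines $\widehat{S}=i_{\Fs_+\to\Fs_+'}$ via Riesz and sets $\widehat{Q_E}=\widehat{S}-\kappa_E\,j_{\Gs_+'\to\Fs_+'}\circ i_{\Gs_+\to\Gs_+'}$, then declares that ``the claimed properties'' follow, so you are in fact supplying the details the paper omits. One small slip: in your coercivity step for $(ii)$ the lower bound from Lemma~\ref{lem:equiv}$(i)$ gives $q_E\ge q_{E'}+(E'-E)\norm[E']{\cdot}^2$, not $\norm[E]{\cdot}^2$; this is harmless since the norms are equivalent by Lemma~\ref{lem:Enormequiv}, but worth correcting.
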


\begin{proof}
We define $\widehat{S}:\Fs_+\to\Fs_+'$ using Riesz' theorem as the unique operator such that
\begin{align*}
[\widehat{S}z_+](y_+)=\sclp[\Fs_+,E]{z_+,y_+}\,.
\end{align*}
The operator $\widehat{Q_E}=\widehat{S}-\kappa_Ej_{\Gs_+'\to\Fs_+'}i_{\Gs_+\to\Gs_+'}$ then has the claimed properties. 
\end{proof}

\subsection{The Definition of $A_F$}\label{subsec:AF}

We consider once more the operator $L_E$, viewed now as a mapping from $(\Fs_+, \norm[\Fs_+,E]{\cdot})$ into $(\Hs_-,\norm[\Hs]\cdot)$. This operator is bounded and we denote its adjoint by $L_E': (\Hs_-,\norm[\Hs]{\cdot})\to (\Fs_+',\norm[\Fs_+',E]{\cdot})$, which is related to the Hilbert adjoint $L^*_E$ by $L^*_E=i^{-1}_{\Fs_+ \to \Fs_+'} L_E' i_{\Hs_- \to \Hs_-'}$.

This allows us to define the operator $\widehat{R_E}: D(\widehat{R_E})\subset\Fs_+\times \Hs_-\to \Fs_+'\times \Hs_-$ for $E>\lambda_0$ as
\begin{align*}
\widehat{R_E}{x_+\choose y_-}&=
\begin{pmatrix}
\id& -L_E'\\
0&\id
\end{pmatrix}
\begin{pmatrix}
\widehat{Q_E}& 0\\
0&-(B+E)
\end{pmatrix}
\begin{pmatrix}
\id& 0\\
-L_E&\id
\end{pmatrix}
{x_+\choose y_-}\\
&=
{\widehat{Q_E}x_++L_E'({B}+E)(y_--L_Ex_+)\choose -(B+E)(y_--L_Ex_+)}
\end{align*}
on the domain
\begin{align*}
D(\widehat{R_E})=\set{{x_+\choose y_-}\in \Fs_+\times\Hs_-}{ y_--L_Ex_+\in D(B)}\subset \Hs_+\times\Hs_-.
\end{align*}
The construction of $\widehat{R_E}$ should be compared to the decomposition \eqref{eq:decomp}. 
By the resolvent identity for any $x_+\in F_+$
\begin{align*}
L_Ex_+-L_{E'}x_+=(E'-E)(B+E)^{-1}L_{E'}x_+    
\end{align*}
and by Lemma \ref{lem:Enormequiv} this identity extends to $\Fs_+$. Thus $D(\widehat{R_E})$ is independent of $E>\lambda_0$ and 
the same holds for the corresponding subset $\mathcal{F}$ of $\Hs$
\begin{align*}
\mathcal{F}\coloneqq\set{x_++y_-\in\Fs_+\oplus\Hs_-}{y_--L_Ex_+\in D(B)}\subset \Hs_+\oplus\Hs_-\,.
\end{align*}

If $E<\lambda_1$ the operator $\widehat{Q_E}$ is invertible and in this case $\widehat{R_E}$ has an inverse defined by 
\begin{align*}
\widehat{R_E}^{-1}{x_+\choose y_-}
&=
\begin{pmatrix}
\id& 0\\
L_E&\id
\end{pmatrix}
\begin{pmatrix}
\widehat{Q_E}^{-1}& 0\\
0&-({B}+E)^{-1}
\end{pmatrix}
\begin{pmatrix}
\id& L_E'\\
0&\id
\end{pmatrix}
{\ell_+\choose k_-}\\
&={\widehat{Q_E}^{-1}(\ell_++L_E'k_-)
\choose
L_E\widehat{Q_E}^{-1}(\ell_++L_E'k_-)-({B}+E)^{-1}k_-
}
\end{align*}
for all $(\ell_+, k_-)\in \Fs_+'\times \Hs_-$. It is straightforward to see that this operator maps into the domain of $\widehat{R_E}$ and vice versa. 

We now define $D(R)\subset \mathcal{F}\subset \Fs_+\oplus \Hs_-$ to be the set
\begin{align*}
D(R)=\big\{x_++y_-\in \mathcal{F}:\,
&\widehat{Q_E}x_++L_E'(B+E)(y_--L_Ex_+)\in j_+(\Hs _+)\big\}
\end{align*}
which will be proved to be independent of $E$. On this domain 
we define for $E>\lambda_0$ the family of operators $R_E:D(R)\to \Hs$ acting as
\begin{align*}
R_E(x_++y_-)=j_+^{-1}(\widehat{Q_E}x_++L_E'(B+E)(y_--L_Ex_+))
-(B+E)(y_--L_Ex_+)\,.
\end{align*}
Here we use the notation $j_+$ for the embedding $j_{\Hs_+\to\Fs'_+}$. 
In the following we prove that $R_E$ is an extension of $A-E$, that its domain is indeed independent of $E$ and that it is self-adjoint.

To see that $R_E$ is an extension of $A-E$, we note that for
$x_+\in F_+, y_-\in F_-$ we have $y_--L_Ex_+=y_--(B+E)^{-1}\Lambda_-Ax_+\in D(B)$. 
Furthermore, for any $u_+\in F_+$ we compute that
\begin{align*}
&[\widehat{Q_E}x_++L_E'(B+E)(y_--L_Ex_+)](u_+)\\
&=\sclp[\Hs]{x_+,(A-E)u_+}\!+\!\sclp[\Hs]{\Lambda_-Ax_+,L_Eu_+}\!+\!\sclp[\Hs]{(B+E)(y_-\!-\!(B+E)^{-1}\Lambda_-Ax_+), L_E u_+}\\
&=\sclp[\Hs]{x_+,(A-E)u_+}\!+\!\sclp[\Hs]{(B+E)y_-, L_E u_+}\\
&=\sclp[\Hs]{(A-E)x_+,u_+}\!+\!\sclp[\Hs]{Ay_-,u_+} \ .
\end{align*}
The linear functional $\sclp[\Hs]{(A-E)x_+,\cdot}+\sclp[\Hs]{Ay_-,\cdot}$ is bounded on $F_+$ and extends continuously to $\Fs_+$. Hence $\widehat{Q_E}x_++L_E'(B+E)(y_--L_Ex_+) \in j_+(\Hs_+)$ and $F_+\oplus F_-\subset D(A_F)$. Since in addition for any $v_-\in F_-$
\begin{align*}
-\sclp[\Hs]{({B}+E)(y_--L_Ex_+),v_-}
&=-\sclp[\Hs]{(B+E)y_-,v_-}+\sclp[\Hs]{\Lambda_-Ax_+,v_-}\\
&=\sclp[\Hs]{(A-E)y_-,v_-}+\sclp[\Hs]{Ax_+,v_-}
\end{align*}
we obtain that for all $x_+,u_+\in F_+, y_-,v_-\in F_-$
\begin{align*}
\sclp[\Hs]{R_E(x_++y_-),u_++v_-}=\sclp[\Hs]{(A-E)(x_++y_-),u_++v_-}
\end{align*} 
which allows us to conclude that $R_E$ is an extension of $A-E$. 

To show that $D(R)$ is independent of $E$, we first note that by the above for any $x_+,u_+\in F_+$ and $y_-\in F_-$
\begin{equation}
\begin{split}
&\overline{q_E}(x_+,u_+)\!+\!\sclp[\Hs]{y_-\!-\!L_Ex_+,(B\!+\!E)L_Eu_+}
\!-\!\overline{q_{E'}}(x_+,u_+)\!-\!\sclp[\Hs]{y_-\!-\!L_{E'}x_+,(B\!+\!E')L_{E'}u_+}\\
&=(E'-E)\sclp[\Hs]{x_+,u_+}\,.
\end{split}\label{eq:QEE'}
\end{equation}
Let now $x_++y_-\in D(R)\subset \Fs_+\oplus \Hs_-$, such that for some $E>\lambda_0$
\begin{align*}
y_--L_Ex_+\in D(B)\,,\quad
[\widehat{Q_E}x_++L_E'(B+E)(y_--L_Ex_+)]\in j_+(\Hs_+)\,.
\end{align*}
We have already seen that then also $y_--L_{E'}x_+\in D(B)$ for any $E'>\lambda_0$. We can approximate $x_++y_-$ by elements of $x_+^{(n)}+y_-^{(n)}\in F_+\oplus F_-$ such that
\begin{align*}
 \norm[\Fs_+,E]{x_+-x_+^{(n)}}&\to 0\,,\quad
\norm[\Hs]{y_--y_-^{(n)}}\to 0\,.
\end{align*}
By continuity \eqref{eq:QEE'} extends to $x_++y_-$ and we obtain that for all $u_+\in F_+$
\begin{align*}
&[\widehat{Q_E}x_++L_E'(B+E)(y_--L_Ex_+)](u_+)
-[\widehat{Q_{E'}}x_++L_{E'}'(B+E')(y_--L_{E'}x_+)](u_+) \\
&=(E'-E)\sclp[\Hs]{x_+,u_+}
\end{align*}
and thus also 
\begin{align*}
y_--L_{E'}x_+\in D(B)\,,\quad
[\widehat{Q_{E'}}x_++L_{E'}'(B+E')(y_--L_{E'}x_+)]\in j_+(\Hs_+)\,.    
\end{align*}

To prove that $R_E$ is symmetric,  we compute that for given $u_++v_-$, $x_++y_-\in D(R)$
\begin{align*}
    &\sclp[\Hs]{R_E(x_++y_-),u_++v_-}\\
&=\sclp[\Hs]{j_+^{-1}(\widehat{Q_E}x_++L_E'(B+E)(y_--L_Ex_+))
-(B+E)(y_--L_Ex_+), u_++v_-}\\
&=\sclp[\Hs]{j_+^{-1}(\widehat{Q_E}x_++L_E'(B+E)(y_--L_Ex_+),u_+}
-\sclp[\Hs]{(B+E)(y_--L_Ex_+), v_-}\\
&=[\widehat{Q_E}x_+](u_+)
+\sclp [\Hs]{(B+E)(y_--L_Ex_+), L_Eu_+}
-\sclp [\Hs]{(B+E)(y_--L_Ex_+),v_-}\\
&=\overline{q_E}(x_+,u_+)
-\sclp[\Hs]{(B+E)(y_--L_Ex_+),(v_--L_Eu_+)}\,.
\end{align*}
This last expression is symmetric in interchanging $u_++v_-$ and $x_++y_-$ and hence $R_E$ is a symmetric operator. 

For $E<\lambda_1$, the operator $R_E^{-1}:\Hs= \Hs_+ \oplus \Hs_-\to\Hs$ defined as 
\begin{align*}
R_E^{-1}(x_++y_-)
=\widehat{Q_E}^{-1}(j_+(x_+)+L_E'y_-)
+L_E\widehat{Q_E}^{-1}(j_+(x_+)+L_E'y_-)-(B+E)^{-1}y_-
\end{align*}
is the inverse of $R_E$. It is itself symmetric and since defined on all of $\Hs$, self-adjoint. By the Hellinger--Toeplitz theorem it is also a bounded operator, hence closed. But then $R_E$ itself as a bijective, symmetric and closed operator is also self-adjoint. The self-adjointness then extends to $R_E$ for any $E>\lambda_0$.

Lastly, we define the extension $A_F$ of $A$ as $A_F\coloneqq R_E+E$ on $D(A_F)\coloneqq D(R)$.\label{def:DAF} 
\subsection{The Uniqueness of $A_F$}
Let $\widetilde{A}$ be another self-adjoint extension of $A$ with $D(\widetilde{A})\subset\Fs_+\oplus\Hs_-$. We first show that then necessarily $D(\widetilde{A})\subset\mathcal{F}$. For $x_++y_-\in D(\widetilde{A})$ and $v_-\in F_-$ we compute
\begin{align*}
&\sclp[\Hs]{(\widetilde{A}-E)(x_++y_-), v_-}
=\sclp[\Hs]{x_++y_-,(A-E)v_-}
=\sclp[\Hs]{x_++y_-,(A_F-E)v_-}\\
&=-\sclp[\Hs]{y_--L_Ex_+,(B+E)v_-}
\end{align*}
and we can conclude that $v_-\mapsto\sclp[\Hs]{y_--L_Ex_+,(B+E)v_-}$ is a continuous functional for all $v_-\in F_-$. This implies  $y_--L_Ex_+\in D(\Lambda_-A|_{F_-}^*)=D(B)$ and thus $x_++y_-\in \mathcal{F}$. Taking  $u_+\in F_+$ we further compute
\begin{align*}
&\sclp[\Hs]{(\widetilde{A}-E)(x_++y_-), u_+}
=\sclp[\Hs]{x_++y_-,(A-E)u_+}
=\sclp[\Hs]{x_++y_-,(A_F-E)u_+}\\
&=\sclp[\Hs]{x_+,j_+^{-1}(\widehat{Q_E}u_+-L_E'(B+E)L_Eu_+)}
+\sclp[\Hs]{y_-,({B}+E)L_Eu_+}\\
&=\overline{[\widehat{Q_E}u_+](x_+)-\sclp[\Hs]{(B+E)L_Eu_+,L_Ex_+}+\sclp[\Hs]{(B+E)L_Eu_+,y_-}}\\
&={[\widehat{Q_E}x_+](u_+)}
+\sclp[\Hs]{(B+E)(y_--L_Ex_+),L_Eu_+}\\
&=[\widehat{Q_E}x_++L_E'({B}+E)(y_--L_Ex_+)](u_+)\,.
\end{align*}
From this we can conclude that $x_++y_-\in D(A_F)=D(R)$ and thus $D(\widetilde{A})\subset D(A_F)$. Conversely, by self-adjointness, $D(A_F)=D(A_F^*)\subset D(\widetilde{A}^*)=D(\widetilde{A})$, which proves the desired $\widetilde{A}=A_F$. 

\subsection{The Proof of the Variational Principle}\label{subsec:var}

It remains to prove that the variational principle holds. The min-max levels of $Q_E$ on $(\Gs_+,\sclp[E]{\cdot,\cdot})$ are given by 
\begin{align*}
\mu_k(Q_E)
&=\inf_{\substack{V\subset\Fs_+\\\dim V=k}}\sup_{x_+\in V\setminus\{0\}}\frac{\overline{q_E}(x_+,x_+)}{\norm[E]{x_+}^2}\\
&=\inf_{\substack{V\subset F_+\\\dim V=k}}\sup_{x_+\in V\setminus\{0\}}\frac{q_E(x_+,x_+)}{\norm[E]{x_+}^2}
\end{align*} 
where we used that $F_+$ is a form core of $\overline{q_E}$. The numbers $\mu_k(Q_E)$ satisfy $\mu_k(Q_E)\le\inf\sigma_{ess}(Q_E)$ and if $\mu_k(Q_E)<\inf\sigma_{ess}(Q_E)$ then $\mu_k$ is an eigenvalue of $Q_E$ with multiplicity 
\begin{align*}
m_k(Q_E)=\#\set{j\ge 1}{\mu_j(Q_E)=\mu_k(Q_E)}\, .
\end{align*}
We need the following result, which can be found in \cite[Lemma 2.2]{Dolbeault2000}. 

\begin{lemma}\label{lem:var}
Under the assumptions of Theorem \ref{th:main}, it holds that:
\begin{enumerate}[(i)]
\item For any $x_+\in F_+\setminus\{0\}$ the real number
\begin{align*}
E(x_+)\coloneqq \sup_{z\in(\mathrm{span}(x_+)\oplus F_-)\setminus\{0\}}\frac{\sclp[\Hs]{z,Az}}{\norm[\Hs]{z}^2}
\end{align*}
is the unique solution in $(\lambda_0,+\infty)$ of
\begin{align*}
q_{E}(x_+,x_+)=0\,,
\end{align*}
which may also be written as
\begin{align*}
E\norm[\Hs]{x_+}^2=\sclp[\Hs]{x_+,Ax_+}+\sclp[\Hs]{\Lambda_-Ax_+,L_E x_+}\,.
\end{align*}
\item The variational principle \eqref{eq:var} is equivalent to
\begin{align*}
\lambda_k=\inf_{\substack{V\subset F_+\\\dim V=k}}\sup_{x_+\in V\setminus\{0\}}E(x_+)\,.
\end{align*} 
\item For any $k\ge 1$ the  real number $\lambda_k$ given by \eqref{eq:var} is the unique solution of
\begin{align*}
\mu_k(Q_\lambda)=0\,.
\end{align*}
\end{enumerate}
\end{lemma}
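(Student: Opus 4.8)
The plan is to prove the three parts of Lemma \ref{lem:var} in order, using the structural results already established about $q_E$, $L_E$ and $\Gs_+$, and then to assemble them into the variational principle of Theorem \ref{th:main}.

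\textbf{Part (i).} Fix $x_+\in F_+\setminus\{0\}$ and recall from the lemma preceding Lemma \ref{lem:equiv} that $q_E(x_+,x_+)=\sup_{y_-\in F_-}\varphi_{E,x_+}(y_-)$, where $\varphi_{E,x_+}(y_-)=\sclp[\Hs]{x_++y_-,A(x_++y_-)}-E\norm[\Hs]{x_++y_-}^2$. Dividing by $\norm[\Hs]{x_++y_-}^2>0$, one sees immediately that
\begin{align*}
q_E(x_+,x_+)\ge 0 \iff \sup_{y_-\in F_-}\frac{\sclp[\Hs]{x_++y_-,A(x_++y_-)}}{\norm[\Hs]{x_++y_-}^2}\ge E
\end{align*}
and similarly with $>$ and $<$. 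Hence $E(x_+)$ is exactly the threshold value separating $\{E>\lambda_0: q_E(x_+,x_+)>0\}$ from $\{E>\lambda_0: q_E(x_+,x_+)<0\}$. To see that this threshold is attained and is a genuine simple root, I would invoke Lemma \ref{lem:equiv}(i): for $\lambda_0<E\le E'$ one has $q_{E'}(x_+,x_+)\le q_E(x_+,x_+)-(E'-E)\norm[E']{x_+}^2<q_E(x_+,x_+)$, so $E\mapsto q_E(x_+,x_+)$ is strictly decreasing and continuous (again by the two-sided estimate of Lemma \ref{lem:equiv}(i) together with the norm equivalence \eqref{eq:Enormequiv}, since $\norm[E]{x_+}$ depends continuously on $E$). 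Moreover $q_E(x_+,x_+)\ge(\lambda_1-E)\norm[\Hs]{x_+}^2>0$ for $E$ slightly above $\lambda_0$ provided $\lambda_0<\lambda_1$ (shown in the proof of Lemma \ref{lem:equiv}(iii)), and $q_E(x_+,x_+)\le -\eps\norm[\Hs]{x_+}^2<0$ is not directly available for a single fixed $x_+$, so instead I would note that $\varphi_{E,x_+}(0)=\sclp[\Hs]{x_+,(A-E)x_+}\to-\infty$ as $E\to\infty$ forces $q_E(x_+,x_+)\to-\infty$; hence by the intermediate value theorem there is a unique root $E(x_+)\in(\lambda_0,\infty)$, and the threshold characterisation above identifies it with the supremum in the statement. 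Finally, plugging $y_{\max}=L_Ex_+$ (the maximiser supplied by the lemma before Lemma \ref{lem:equiv}) into $\varphi_{E,x_+}$ gives $q_E(x_+,x_+)=\sclp[\Hs]{x_+,Ax_+}-E\norm[\Hs]{x_+}^2+\sclp[\Hs]{\Lambda_-Ax_+,L_Ex_+}$, so $q_{E(x_+)}(x_+,x_+)=0$ is exactly the displayed implicit equation.

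\textbf{Part (ii).} This is a formal rearrangement of the min-max in \eqref{eq:var}. For a fixed finite-dimensional $V\subset F_+$, the inner supremum in \eqref{eq:var} is $\sup_{x_+\in V\setminus\{0\}}\sup_{z\in(\mathrm{span}(x_+)\oplus F_-)\setminus\{0\}}\frac{\sclp[\Hs]{z,Az}}{\norm[\Hs]{z}^2}$ once one observes that every $z\in(V\oplus F_-)\setminus\{0\}$ with $\Lambda_+z\ne0$ lies in $\mathrm{span}(\Lambda_+z)\oplus F_-$, and the case $\Lambda_+z=0$ contributes at most $\lambda_0<\lambda_1\le\lambda_k$, hence does not affect the supremum whenever $\lambda_k>\lambda_0$ (and when $\lambda_k=\lambda_0$ the identity still holds trivially). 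Thus the inner $\sup$ equals $\sup_{x_+\in V\setminus\{0\}}E(x_+)$ and taking $\inf$ over $V$ gives the claimed formula.

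\textbf{Part (iii).} Combine (i) and (ii) with the monotonicity and continuity of $\mu_k(Q_\lambda)$ in $\lambda$. From the two-sided estimate of Lemma \ref{lem:equiv}(i) and $\norm[\Hs]{x_+}\le\norm[E']{x_+}\le\norm[E]{x_+}$ one gets, for $\lambda_0<\lambda\le\lambda'$ and every $V\subset F_+$ with $\dim V=k$,
\begin{align*}
\sup_{x_+\in V\setminus\{0\}}\frac{q_{\lambda'}(x_+,x_+)}{\norm[\lambda']{x_+}^2}
\le \sup_{x_+\in V\setminus\{0\}}\frac{q_\lambda(x_+,x_+)}{\norm[\lambda']{x_+}^2}-(\lambda'-\lambda)
\end{align*}
and, using $\norm[\lambda']{x_+}^2\le\norm[\lambda]{x_+}^2\le C_{\lambda,\lambda'}^2\norm[\lambda']{x_+}^2$, the quotient $q_\lambda(x_+,x_+)/\norm[\lambda']{x_+}^2$ differs from $q_\lambda(x_+,x_+)/\norm[\lambda]{x_+}^2$ by a factor bounded between $1$ and $C_{\lambda,\lambda'}^2$ up to a controlled error, which yields strict monotonicity $\mu_k(Q_{\lambda'})<\mu_k(Q_\lambda)$ together with continuity of $\lambda\mapsto\mu_k(Q_\lambda)$ as $\lambda\downarrow\lambda_0$ and $\lambda\uparrow\infty$. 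By the min-max characterisation of $\mu_k$ recalled just before Lemma \ref{lem:var}, $\mu_k(Q_\lambda)=\inf_{V}\sup_{x_+\in V}q_\lambda(x_+,x_+)/\norm[\lambda]{x_+}^2$, and dividing $q_\lambda(x_+,x_+)=0$ by $\norm[\lambda]{x_+}^2$ shows the set of $\lambda$ with $\sup_{x_+\in V}q_\lambda(x_+,x_+)/\norm[\lambda]{x_+}^2=0$ records, via (i), exactly $\sup_{x_+\in V}E(x_+)$; taking the infimum over $V$ and invoking (ii) identifies the unique zero of $\mu_k(Q_\lambda)$ with $\lambda_k$. (One must also check $\mu_k(Q_\lambda)>0$ for $\lambda$ near $\lambda_0$, which follows from Lemma \ref{lem:equiv}(iii) since $q_\lambda>0$ on $F_+\setminus\{0\}$ for $\lambda<\lambda_1$, and $\mu_k(Q_\lambda)<0$ for large $\lambda$, which follows because $q_\lambda(x_+,x_+)\le\sclp[\Hs]{x_+,Ax_+}-\lambda\norm[\Hs]{x_+}^2$ becomes negative.)

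\textbf{Main obstacle.} The delicate point is the continuity and strict monotonicity of $\lambda\mapsto\mu_k(Q_\lambda)$ on the Hilbert space $\Gs_+$, because the \emph{norm} $\norm[\lambda]{\cdot}$ in the Rayleigh quotient also varies with $\lambda$; one cannot simply cite monotone perturbation theory for a fixed form domain. The estimates of Lemma \ref{lem:equiv}(i) and the norm equivalence \eqref{eq:Enormequiv} are precisely engineered to control both the numerator and the denominator simultaneously, and the argument must be carried out at the level of the min-max quotients (not the operators) to stay within $F_+$, which is the only space on which $q_\lambda$ is explicitly given. Everything else — parts (i) and (ii) — is elementary once the maximiser $y_{\max}=L_\lambda x_+$ and the Schur-complement identity $q_\lambda(x_+,x_+)=\sup_{y_-}\varphi_{\lambda,x_+}(y_-)$ are in hand.
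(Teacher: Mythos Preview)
Your overall strategy matches the paper's closely: strict monotonicity and continuity of $E\mapsto q_E(x_+,x_+)$ for (i), the reduction to $\Lambda_+z\neq 0$ for (ii), and the zero of $\lambda\mapsto\mu_k(Q_\lambda)$ for (iii). However, there is a repeated slip that breaks two of your justifications. You write that $\varphi_{E,x_+}(0)=\sclp[\Hs]{x_+,(A-E)x_+}\to-\infty$ ``forces $q_E(x_+,x_+)\to-\infty$'', and later that ``$q_\lambda(x_+,x_+)\le\sclp[\Hs]{x_+,Ax_+}-\lambda\norm[\Hs]{x_+}^2$''. Both are the wrong direction: since $q_E(x_+,x_+)=\sup_{y_-}\varphi_{E,x_+}(y_-)\ge\varphi_{E,x_+}(0)$, a lower bound tending to $-\infty$ says nothing about the supremum. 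The fix is immediate from what you already have: the strict decrease $q_{E'}(x_+,x_+)\le q_E(x_+,x_+)-(E'-E)\norm[\Hs]{x_+}^2$ (Lemma~\ref{lem:equiv}(i) with $\norm[E']{\cdot}\ge\norm[\Hs]{\cdot}$) gives $q_{E'}\to-\infty$ directly, and the same estimate at the level of min--max values handles the analogous claim for $\mu_k(Q_\lambda)$.

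On part (iii), your argument via global strict monotonicity of $\lambda\mapsto\mu_k(Q_\lambda)$ is more than you can cleanly extract from the estimates, because when $q_\lambda(x_+,x_+)>0$ the passage from $\norm[\lambda']{\cdot}$ to $\norm[\lambda]{\cdot}$ in the denominator goes the wrong way. What actually follows from Lemma~\ref{lem:equiv}(i) and \eqref{eq:Enormequiv} is the weaker (but sufficient) statement: if $\mu_k(Q_\lambda)\le 0$ then $\mu_k(Q_{\lambda'})\le -(\lambda'-\lambda)<0$ for all $\lambda'>\lambda$; together with $\mu_k(Q_\lambda)>0$ for $\lambda<\lambda_1$ this gives a unique zero. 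The paper avoids this altogether by a direct two--sided case analysis: having fixed a zero $\widetilde\lambda_k$ of $\mu_k(Q_\cdot)$, it shows $\lambda<\widetilde\lambda_k\Rightarrow\lambda\le\lambda_k$ and $\lambda>\widetilde\lambda_k\Rightarrow\lambda\ge\lambda_k$ by producing explicit test vectors/spaces via Lemma~\ref{lem:equiv}(i), which simultaneously yields uniqueness and the identification $\widetilde\lambda_k=\lambda_k$.
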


\begin{proof}
First note that $q_E(x_+,x_+)$ for fixed $x_+\in F_+\setminus\{0\}$ is a strictly decreasing, continuous function of $E$ with $q_{\lambda_1}(x_+,x_+)\ge0$ by Lemma \ref{lem:equiv} and with $\lim_{E\to\infty}q_E(x_+,x_+)=-\infty$.  We can conclude that $q_E(x_+,x_+)=0$ has precisely one solution in  $[\lambda_1,+\infty)$. Denote this solution by $\widetilde{E}(x_+)$. 

If $E<\widetilde{E}(x_+)$ then necessarily $q_E(x_+,x_+)>0$ and thus there exists a $y_-\in F_-$ such that 
\begin{align*}
\sclp[\Hs]{x_++y_-,A(x_++y_-)}-E\norm[\Hs]{x_++y_-}^2>0\,.
\end{align*}
We obtain that
\begin{align*}
E(x_+)=\sup_{z\in(\mathrm{span}(x_+)\oplus F_-)\setminus\{0\}}\frac{\sclp[\Hs]{z,Az}}{\norm[\Hs]{z}^2}
\ge \frac{\sclp[\Hs]{x_++y_-,A(x_++y_-)}}{\norm[\Hs]{x_++y_-}^2}>E\,.
\end{align*}
If $E>\widetilde{E}(x_+)$ then necessarily $q_E(x_+,x_+)\le-\eps<0$ for some $\eps$ and thus for all $y_-\in F_-$ 
\begin{align*}
\sclp[\Hs]{x_++y_-,A(x_++y_-)}-E\norm[\Hs]{x_++y_-}^2\le-\eps\,.
\end{align*}
Consequently, 
\begin{align*}
E(x_+)=\sup_{z\in(\mathrm{span}(x_+)\oplus F_-)\setminus\{0\}}\frac{\sclp[\Hs]{z,Az}}{\norm[\Hs]{z}^2}
\le -\eps+E<E\,.
\end{align*}
This proves that $\widetilde{E}(x_+)=E(x_+)$. 

The statement $(ii)$ is an immediate consequence of the definitions of $E(x_+)$ and $\lambda_k$ as well as the observation that, since $\lambda_0<\lambda_1$, for any $k$-dimensional subspace $V\subset F_+$
\begin{align*}
\sup_{z\in (V\oplus F_-)\setminus\{0\}}\frac{\sclp[\Hs]{z,Az}}{\norm[\Hs]{z}^2}=
\sup_{\substack{z\in V\oplus F_-\\ \Lambda_+z\neq 0}}\frac{\sclp[\Hs]{z,Az}}{\norm[\Hs]{z}^2}\,.
\end{align*}

Note that  $\mu_k(Q_\lambda)$ is a continuous function of $\lambda$  with $\mu_k(Q_{\lambda_1})\ge 0$ by Lemma \ref{lem:equiv}. Furthermore  $\lim_{\lambda\to+\infty}\mu_k(Q_\lambda)=-\infty$ and we can conclude that $\mu_k(Q_\lambda)=0$ has at least one solution in $[\lambda_1,+\infty)$. Denote this solution by $\widetilde{\lambda}_k$. 

Assume $\lambda<\widetilde{\lambda}_k$. For all $V\subset F_+$ with $\dim V=k$ there exists an $x_+^V\in V\setminus\{0\}$ such that
\begin{align*}
q_{\widetilde{\lambda}_k}(x_+^V,x_+^V)
\ge -\frac{\widetilde{\lambda}_k-\lambda}{2}\norm[\widetilde{\lambda}_k]{x_+^V}^2
\end{align*}
and thus by Lemma \ref{lem:equiv}
\begin{align*}
q_{\lambda}(x_+^V,x_+^V)
\ge q_{\widetilde{\lambda}_k}(x_+^V,x_+^V)+(\widetilde{\lambda}_k-\lambda)\norm[\widetilde{\lambda}_k]{x_+^V}^2
\ge \frac{\widetilde{\lambda}_k-\lambda}{2}\norm[\widetilde{\lambda}_k]{x_+^V}^2>0\,.
\end{align*}
This implies the existence of $y_-^{V}\in F_-$ such that
\begin{align*}
\varphi_{\lambda,x_+^V}(y_-^V)=\sclp[\Hs]{x_+^{V}+y_-^{V},A(x_+^{V}+y_-^{V})}
-\lambda\norm[\Hs]{x_+^{V}+y_-^{V}}^2
\ge 0\,.
\end{align*}
We obtain that
\begin{align*}
\sup_{z\in (V\oplus F_-)\setminus\{0\}}\frac{\sclp[\Hs]{z,Az}}{\norm[\Hs]{z}^2}
\ge\frac{\sclp[\Hs]{x_+^{V}+y_-^{V},A(x_+^{V}+y_-^{V})}}{\norm[\Hs]{x_+^{V}+y_-^{V}}^2}
\ge\lambda
\end{align*}
and thus $\lambda\le\lambda_k$. 
Assume $\lambda>\widetilde{\lambda}_k$. There exists a vector space $V_0$ such that for all $x_+\in V_0$
\begin{align*}
q_{\widetilde{\lambda}_k}(x_+,x_+)
\le\frac{\lambda-\widetilde{\lambda}_k}{2C_{\widetilde{\lambda}_k,\lambda}}{\norm[\widetilde{\lambda}_k]{x_+}^2}
\le \frac{\lambda-\widetilde{\lambda}_k}{2}{\norm[{\lambda}]{x_+}^2}
\end{align*}
and thus by Lemma \ref{lem:equiv}
\begin{align*}
q_{\lambda}(x_+,x_+)
\le q_{\widetilde{\lambda}_k}(x_+,x_+)-(\lambda-\widetilde{\lambda}_k)\norm[\lambda]{x_+}^2
\le- \frac{\lambda-\widetilde{\lambda}_k}{2}\norm[{\lambda}]{x_+}^2<0
\end{align*}
for all $x_+\in V_0\setminus\{0\}$. 
This implies that
\begin{align*}
\sclp[\Hs]{x_++y_-, A(x_++y_-)}-\lambda\norm[\Hs]{x_++y_-}^2\le 0
\end{align*}
for all $x_+\in V_0\setminus\{0\}$ and all $y_-\in F_-$. 
We can conclude that
\begin{align*}
\lambda_k
&\le \sup_{z\in (V\oplus F_-)\setminus\{0\}}\frac{\sclp[\Hs]{z,Az}}{\norm[\Hs]{z}^2}\\
&\le \max\Big(
\sup_{x_+\in V_0\setminus\{0\}}\sup_{y_-\in F_-}\frac{\sclp[\Hs]{x_++y_-,A(x_++y_-)}}{\norm[\Hs]{x_++y_-}^2},
\sup_{y_-\in F_-\setminus\{0\}}\frac{\sclp[\Hs]{y_-,Ay_-}}{\norm[\Hs]{y_-}^2}
\Big)\\
&\le\max(
\lambda,\lambda_0)=\lambda
\end{align*}
and together with the above $\lambda_k=\widetilde{\lambda}_k$.

\end{proof}

To prove that the real numbers $\lambda_k$ are in the spectrum of $A_F$, we use an argument presented in \cite[Section 2]{Dolbeault2000} and construct a sequence of subspaces $X_n$ of dimension $d_k$ such that 
\begin{align*}
\lim_{n\to\infty}\sup_{\substack{x\in X_n\\\norm[\Hs]{x}=1}}
\sup_{y\in D(A_F)\setminus\{0\}}\frac{|\sclp[\Hs]{x,(A_F-\lambda_k)y}|}{\sqrt{\norm[\Hs]{y}^2+\norm[\Hs]{A_Fy}^2}}
=0\,.
\end{align*}
First note that
\begin{align*}
d_k\coloneqq\#\set{j\ge 1}{\lambda_j=\lambda_k}
=\#\set{j\ge 1}{\mu_j(Q_{\lambda_k})=\mu_k(Q_{\lambda_k})=0}
=m_k(Q_{\lambda_k})
\end{align*}
and by the min-max principle for $Q_{\lambda_k}$, there exists a sequence of spaces $X_n^+\subset D(Q_{\lambda_k})$ of dimension $d_k$ such that
\begin{align*}
\lim_{n\to\infty}\sup_{\substack{x_+\in X_n^+\\\norm[\lambda_k]{x_+}=1}}\norm[\lambda_k]{Q_{\lambda_k}x_+}=0\, ,
\end{align*}
which also implies that
\begin{align}
\lim_{n\to\infty}\sup_{\substack{x_+\in X_n^+\\\norm[\lambda_k]{x_+}=1}}
\norm[\Fs_+', \lambda_k]{\widehat{Q_{\lambda_k}}x}
=\lim_{n\to\infty}\sup_{\substack{x_+\in X_n^+\\\norm[\lambda_k]{x_+}=1}}
\sup_{y_+\in \Fs_+}\frac{|[\widehat{Q_{\lambda_k}}x_+](y_+)|}{ \norm[\Fs_+,\lambda_k]{y_+}}
=0\,.
\label{eq:Qweakweyl} 
\end{align}
Let $X_n\coloneqq(1+L_{\lambda_k})X_n^+\subset\mathcal{F}$. 
We observe that for all $x_+\in \Fs_+$ and $y\in D(A_F)$ 
\begin{align}
[\widehat{Q_{\lambda_k}}(x_+)](\Lambda_+y)
=\sclp[\Hs]{x_++L_{\lambda_k}x_+,(A_F-\lambda_k)y}\,.
\label{eq:varnum} 
\end{align}
Furthermore for all $y\in D(A_F)$ 
\begin{align*}
[\widehat{Q_{\lambda_k}}(\Lambda_+y)](\Lambda_+y)
&=\sclp[\Hs]{\Lambda_+y+L_{\lambda_k}\Lambda_+y,(A_F-\lambda_k)y}\\
&\le (\norm[\Hs]{y}+\norm[\Hs]{\Lambda_-y-L_{\lambda_k}\Lambda_+y})(1+|\lambda_k|)(\norm[\Hs]{y}+\norm[\Hs]{A_Fy})
\end{align*}
and using 
\begin{align*}
\norm[\Hs]{\Lambda_-y-L_{\lambda_k}\Lambda_+y}
=\norm[\Hs]{(B+\lambda_k)^{-1}\Lambda_-(A_F-\lambda_k)y}    
\le \frac{1+|\lambda_k|}{\lambda_k-\lambda_0}(\norm[\Hs]{y}+\norm[\Hs]{A_Fy})
\end{align*}
we can see that for all $y\in D(A_F)$ there exists a constant $C_{\lambda_k}>0$ such that 
\begin{align*}
[\widehat{Q_{\lambda_k}}(\Lambda_+y)](\Lambda_+y)
\le  C_{\lambda_k}(\norm[\Hs]{y}^2+\norm[\Hs]{A_Fy}^2)\,.
\end{align*}
This allows us to bound 
\begin{align*}
 \norm[\Fs_+,\lambda_k]{\Lambda_+y}^2
&=\kappa_{\lambda_k}\norm[\Hs]{\Lambda_+y_+}^2+\kappa_{\lambda_k}\norm[\Hs]{L_{\lambda_k} \Lambda_+y}^2+[\widehat{Q_{\lambda_k}}(\Lambda_+y)](\Lambda_+y)\\
&\le C_{\lambda_k}'(\norm[\Hs]{y}^2+\norm[\Hs]{A_Fy}^2)
\end{align*}
with some constant $C_{\lambda_k}'>0$ for all $y\in D(A_F)$. 
Together with \eqref{eq:Qweakweyl}, \eqref{eq:varnum} and the fact that  $\norm[\Hs]{x_++L_{\lambda_k}x_+}=\norm[\lambda_k]{x_+}$ we can conclude that
\begin{align}
\lim_{n\to\infty}\sup_{\substack{x\in X_n\\\norm[\Hs]{x}=1}}\sup_{y\in D(A_F)\setminus\{0\}}\frac{|\sclp[\Hs]{x,(A_F-\lambda_k)y}|}{\sqrt{\norm[\Hs]{y}^2+\norm[\Hs]{A_Fy}^2}}=0\,.
\label{eq:weakweyl} 
\end{align}
As $1+L_{\lambda_k}:X_n^+\to X_n$ is a surjective isometry we obtain that $\dim X_n=d_k$. 

Consider the case $d_k<\infty$. Let $P$ be the spectral measure of $A_F$.  Suppose now for some $\eps>0$ we had that $\dim\ran P((\lambda_k-\eps,\lambda_k+\eps))\le d_k-1$. Then there exists a sequence of $x_n\in X_n$ with $\norm[\Hs]{x_n}=1$ and $P((\lambda_k-\eps,\lambda_k+\eps))x_n=0$. We write $x_n=w_n+z_n$ with $w_n\in \ran P((-\infty,\lambda_k-\eps])$ and $z_n\in \ran P([\lambda_k+\eps,\infty))$. Unless $\lambda_k$ is in the essential spectrum of $A_F$, we also observe that for some $\nu\in (\lambda_k-\eps,\lambda_k+\eps)$ necessarily $\nu\in \rho(A_F)$. 
Choosing $y_n=(A_F-\nu)^{-1}x_n\in D(A_F)$ we compute that, if $\lambda_k-\eps<\nu\le\lambda_k$,
\begin{align*}
\sclp[\Hs]{x_n,(A-\lambda_k)y_n}
&=\int_{-\infty}^{\lambda_k-\eps} \frac{\lambda-\lambda_k}{\lambda-\nu}\dd P_{w_n,w_n}(\lambda)
+\int_{\lambda_k+\eps}^\infty 
\frac{\lambda-\lambda_k}{\lambda-\nu}\dd P_{z_n,z_n}(\lambda)\\
&\ge \norm[\Hs]{w_n}^2
+\frac{\eps}{\lambda_k+\eps-\nu}\norm[\Hs]{z_n}^2\\
&\ge \min\left(1,\frac{\eps}{\lambda_k+\eps-\nu}\right) \norm[\Hs]{x_n}^2
\end{align*}
and similarly, if $\lambda_k\le \nu<\lambda_k+\eps$,
\begin{align*}
\sclp[\Hs]{x_n,(A-\lambda_k)y_n} 
\ge \min\left(\frac{\eps}{\nu-\lambda_k+\eps},1\right) \norm[\Hs]{x_n}^2\,.
  \end{align*}
Since $\norm[\Hs]{x_n}=1$ and
\begin{align*}
\norm[\Hs]{y_n}^2+\norm[\Hs]{A_Fy_n}^2
=\norm[\Hs]{(A_F-\nu)^{-1}x_n}^2+\norm[\Hs]{x_n+\nu(A_F-\nu)^{-1}x_n}^2
\le C \norm[\Hs]{x_n}^2
\end{align*} 
we obtain that for some constant $C'>0$
\begin{align*}
|\sclp[\Hs]{x_n,(A_F-\lambda_k)y_n}|\ge C'\sqrt{\norm[\Hs]{y_n}^2+\norm[\Hs]{A_Fy_n}^2}\, ,
\end{align*} 
which contradicts \eqref{eq:weakweyl}. Thus necessarily $\dim\ran P((\lambda_k-\eps,\lambda_k+\eps))\ge d_k$ for any $\eps>0$. 

In the case $d_k=\infty$, we can use the above argument to conclude that $\dim\ran P(\lambda_k-\eps,\lambda_k+\eps)=\infty$ for all $\eps>0$. 
As a consequence $\lambda_k\in\sigma(A_F)$ and $\lambda_k$ is larger or equal to the $k$-th eigenvalue $\mu_k(A_F)$ of $A_F$ in $(\lambda_0,\sup_{\ell\ge1}\lambda_\ell)$. 

Before we prove that the $\lambda_k$ are all the  points in $\sigma(A_F)\cap(\lambda_0,\sup_{\ell\ge 1}\lambda_\ell)$, we first note that (see Remark \ref{rem:Bhat})
\begin{align}
\lambda_0
=\sup_{y_-\in F_-\setminus\{0\}}\frac{-\sclp[\Hs]{y_-,By_-}}{\norm[\Hs]{y_-}^2}
=\sup_{y_-\in \Fs_-\setminus\{0\}}\frac{-[\widehat{B}y_-](y_-)}{\norm[\Hs]{y_-}^2}
=\sup_{y_-\in \Fs_-\cap D(A_F)\setminus\{0\}}\frac{\sclp[\Hs]{y_-,A_Fy_-}}{\norm[\Hs]{y_-}^2}
\label{eq:l0full} 
\end{align}
which is an immediate consequence of the continuity of $\widehat{B}$ with respect to $\norm[\Fs_-]{\cdot}$.  

Now assume that $\lambda\in\sigma(A_F)\cap(\lambda_0,\sup_{\ell\ge 1}\lambda_\ell)$ with spectral multiplicity $d$. We have to show that $\lambda=\lambda_k$ for some $k\in\N$, or equivalently that $\mu_k(Q_\lambda)=0$ for some $k\in\N$. By assumption there exist spaces $X_n\subset D(A_F)$ with $\dim X_n=d$ such that
\begin{align*}
\lim_{n\to\infty}\sup_{\substack{x\in X_n\\\norm[\Hs]{x}=1}}\norm[\Hs]{(A_F-\lambda)x}=0\,.
\end{align*}
In particular we obtain that
\begin{align*}
\lim_{x\to\infty}\sup_{x\in X_n}\frac{\norm[\Hs]{(B+\lambda)(\Lambda_-x-L_\lambda\Lambda_+x)}}{\norm[\Hs]{x}}=0
\end{align*}
and since $(B+\lambda)^{-1}$ is a bounded operator
\begin{align*}
\lim_{x\to\infty}\sup_{x\in X_n}\frac{\norm[\Hs]{\Lambda_-x-L_\lambda\Lambda_+x}}{\norm[\Hs]{x}}=0\,.
\end{align*}
We can conclude that there exists an $N\in\N$ such that
\begin{align*}
\norm[\Hs]{\Lambda_-x-L_\lambda\Lambda_+x}\le \frac{\norm[\Hs]{x}}{2} 
\end{align*}
for all $x\in X_n$ with $n\ge N$. In the remainder we assume without loss of generality that $N=1$. Note that
\begin{align*}
0&=\lim_{n\to\infty}\sup_{\substack{x\in X_n\\\norm[\Hs]{x}=1}}\norm[\Hs]{(A_F-\lambda)x}
=\lim_{n\to\infty}\sup_{\substack{x\in X_n\\\norm[\Hs]{x}=1}}\sup_{\substack{y\in \Hs\\\norm[\Hs]{y}=1}}|\sclp[\Hs]{(A_F-\lambda)x,y}|\\
&=\lim_{n\to\infty}\sup_{\substack{x\in X_n\\\norm[\Hs]{x}=1}}\sup_{\substack{y\in \mathcal{F}\\\norm[\Hs]{y}=1}}
|[\widehat{Q_\lambda}\Lambda_+x](\Lambda_+y)-\sclp[\Hs]{(B+\lambda)(\Lambda_-x-L_\lambda \Lambda_+x),(\Lambda_-y-L_\lambda \Lambda_+y)}|
\end{align*}
and in particular
\begin{align*}
0=\lim_{n\to\infty}\sup_{\substack{x\in X_n\\\norm[\Hs]{x}=1}}\sup_{\substack{y\in(1+L_\lambda)F_+\\\norm[\Hs]{y}=1}}|\sclp[\Hs]{(A_F-\lambda)x,y}|
=\lim_{n\to\infty}\sup_{\substack{x\in X_n\\\norm[\Hs]{x}=1}}\sup_{\substack{y_+\in F_+\\\norm[\lambda]{y_+}=1}}
|[\widehat{Q_{\lambda}}(\Lambda_+x)](y_+)|
\,.
\end{align*}
Now let $X_n^+\coloneqq \Lambda_+ X_n\subset\Fs_+$. If $x\in X_n$ is an element of $\Fs_-$, then by \eqref{eq:l0full}
\begin{align*}
\sclp[\Hs]{(A_F-\lambda)x,x}\le (\lambda_0-\lambda)\norm[\Hs]{x}^2
\end{align*}
and thus 
\begin{align*}
|\sclp[\Hs]{(A_F-\lambda)x,x}|\ge (\lambda-\lambda_0)\norm[\Hs]{x}^2
\end{align*}
which is a contradiction to the definition of $X_n$. Thus $\dim X_n^+=d$. Furthermore for $x\in X_n$ by an application of the lower triangle inequality
\begin{align*}
\norm[\lambda]{\Lambda_+ x}
=\norm[\Hs]{\Lambda_+ x_++L_\lambda\Lambda_+x}
=\norm[\Hs]{x-(\Lambda_-x-L_\lambda\Lambda_+x)}
&\ge\frac{\norm[\Hs]{x}}{2}\,. 
\end{align*}
As a consequence
\begin{align*}
\lim_{n\to\infty}\sup_{\substack{x_+\in X_n^+\\\norm[\lambda]{x_+}=1}}\sup_{\substack{y_+\in F_+\\\norm[\lambda]{y_+}=1}}
|[\widehat{Q_{\lambda}}(x_+)](y_+)|=0
\end{align*} 
and thus also
\begin{align}
\lim_{n\to\infty}\sup_{\substack{x_+\in X_n^+\\ \norm[\lambda]{x_+}=1}}\norm[\Fs_+',\lambda]{\widehat{Q_\lambda} x_+}=0\,.
\label{eq:weakweylQ} 
\end{align}
This implies that zero is in the spectrum of $Q_{\lambda}$ by a generalised version of Weyl's criterion, which can for example be found in \cite{Krejcirik2014}. To prove this taking into account the multiplicity $d$, we let $\eps>0$ and let $P$ be the spectral measure of $Q_\lambda$. If $\dim\ran P((-\eps,\eps))\le d-1$ then we can find a sequence of $x_n\in X_n^+$ with $\norm[\lambda]{x_n}=1$ and $P((-\eps,\eps))x_n=0$. Using the embedding $j=j_{\Gs_+'\to\Fs_+'}\circ i_{\Gs_+\to\Gs_+'}$ we can compute that for any $x\in\Fs_+$
\begin{align*}
\norm[\Fs_+',\lambda]{\widehat{Q_\lambda}x}
= \norm[\Fs_+,\lambda]{(\widehat{Q_\lambda}+\kappa_\lambda j)^{-1}\widehat{Q_\lambda}x}
&=\norm[\lambda]{(Q_\lambda+\kappa_\lambda)^{1/2}(x-\kappa_\lambda(\widehat{Q_\lambda}+\kappa_\lambda j)^{-1} j(x))}\\
&=\norm[\lambda]{(Q_\lambda+\kappa_\lambda)^{1/2}(x-\kappa_\lambda(Q_\lambda+\kappa_\lambda)^{-1}x)}\\
&=\norm[\lambda]{Q_\lambda(Q_\lambda+\kappa_\lambda)^{-1/2}x}\,.
\end{align*}
Together with the spectral theorem we obtain that
\begin{align*}
\norm[\Fs_+',\lambda]{\widehat{Q_\lambda}x_n}^2
=\int_{-\kappa_\lambda+1}^{-\eps}\frac{t^2}{t+\kappa_\lambda}\dd P_{x_n,x_n}(t)
+\int_{\eps}^{\infty}\frac{t^2}{t+\kappa_\lambda}\dd P_{x_n,x_n}(t)
\ge \frac{\eps^2}{\eps+\kappa_\lambda}\norm[\lambda]{x}^2
\end{align*}
which is a contradiction to \eqref{eq:weakweylQ}. It remains to prove that $0=\mu_k(Q_\lambda)$, for some $k\in\N$.

Since $\lambda<\sup_{\ell\ge1}\lambda_\ell$ there exists an $\ell\in\N$ such that $\lambda<\lambda_\ell$. By definition $\mu_\ell (Q_{\lambda_\ell})=0$ and thus for any subspace $V\subset F_+$ of dimension $\ell$ there exists an $x_+^V\in V$ such that \begin{align*}
q_{\lambda_\ell}(x_+,x_+)\ge -\eps\norm[\lambda]{x_+}^2\,. 
\end{align*}
By Lemma \ref{lem:equiv} we obtain that
\begin{align*}
q_{\lambda}(x_+^V,x_+^V)
\ge q_{\lambda_\ell}(x_+^V,x_+^V)
\ge -\eps \norm[\lambda_\ell]{x_+^V}^2
\ge -\eps\norm[\lambda]{x_+^V}^2\,. 
\end{align*}
This implies that $\mu_{\ell}(Q_\lambda)\ge 0$, and consequently  $0=\mu_{k}(Q_\lambda)$ for some $k<\ell$. We can conclude that $\lambda=\lambda_k$, which completes the proof of Theorem \ref{th:main}.   

\section{Application to the Dirac--Coulomb Operator}\label{sec:Dirac}
Let $H_0=-\ii\alpha\cdot\nabla+\beta$ be the free Dirac operator where $\alpha^1,\alpha^2,\alpha^3,\beta\in\C^{4\times 4}$ with
\begin{align*}
\alpha^i\alpha^j+\alpha^j\alpha^i=2\delta_{ij}\id_{\C^{4}}\,,
\quad \alpha^i\beta+\beta\alpha^i=0\,,
\quad \beta^2=\id_{\C^4}\,.
\end{align*}
We choose the representation
\begin{align*}
\alpha^i=
\begin{pmatrix}
0&\sigma^i\\
\sigma^i&0
\end{pmatrix}\,,
\quad
\beta=
\begin{pmatrix}
\id_{\C^2} &0\\
0&-\id_{\C^2}
\end{pmatrix}.
\end{align*}
The free Dirac operator is essentially self-adjoint on $\mathcal{C}^\infty_0(\R^3;\C^4)$. 
The Dirac--Coulomb operator $H_\nu=H_0-\nu/|x|$ is symmetric on $D(H_\nu)=\mathcal{C}^\infty_0(\R^3;\C^4)\subset L^2(\R^3;\C^4)\eqqcolon \Hs$. Let $\Lambda_\pm$ be the Talman projections, 
\begin{align*}
\Lambda_+{\varphi\choose\psi}={\varphi\choose 0},
\quad \Lambda_-{\varphi\choose\psi}={0\choose \psi}.
\end{align*}
Then clearly $\Lambda_\pm D(H_\nu)\subset D(H_\nu)$ and thus the first assumption of Theorem \ref{th:main} is satisfied. We further compute that
\begin{align*}
\lambda_0=\sup_{\psi\in\mathcal{C}^\infty_0(\R^3;\C^2)\setminus\{0\}}
\frac{\int_{\R^3}(-1-\frac{\nu}{|x|})|\psi(x)|^2\dd x}{\norm[\Hs]{\psi}^2}
=\sup_{x\in\R^3}(-1-\nu/|x|)=-1\,.
\end{align*}
Dolbeault, Esteban, Loss and Vega  \cite{Dolbeault2004} proved the Hardy inequality
\begin{align}
\int_{\R^3}\frac{|\sigma\cdot\nabla \psi(x)|^2}{1+\frac{1}{|x|}}\dd x
+\int_{\R^3}\left(1-\frac{1}{|x|}\right)|\psi(x)|^2\dd x\ge0
\label{eq:HardyT} 
\end{align} 
for all $\psi\in H^1(\R^3;\C^2)$ by analytic methods. Following similar computations in \cite{Esteban2007,Esteban2017} we can use \eqref{eq:HardyT} to prove that $q_0(\psi,\psi)\ge0$ for all $\psi\in\mathcal{C}_0^\infty(\R^3;\C^2)$ and all  $\nu\in[0,1]$. Here $q_E$ is the Schur complement
\begin{align*}
q_E(\psi,\psi)=\int_{\R^3}\left(1-\frac{\nu}{|x|}-E\right)|\psi(x)|^2\dd x+\int_{\R^3}\frac{|\sigma\cdot\nabla \psi(x)|^2}{1+\frac{\nu}{|x|}+E}\dd x\,.
\end{align*}
As a consequence of Lemma \ref{lem:equiv} $(iii)$ we obtain $\lambda_1\ge0>\lambda_0$. Note that this statement can also be proved by means of an abstract continuation principle \cite[Section 3]{Dolbeault2000} and can then in turn  be used to establish the Hardy inequality \eqref{eq:HardyT} \cite[Section 4]{Dolbeault2000}.  
Since in addition $-1-\nu/|x|$ is essentially self-adjoint on $\mathcal{C}_0^\infty(\R^3;\C^2)$, all the conditions of Theorem \ref{th:main} are satisfied and thus for any $\nu\in [0,1]$ there exists a self-adjoint extension of $H_\nu$ with eigenvalues given by
\begin{align*}
\lambda_k=\inf_{\substack{V\subset \mathcal{C}^\infty_0(\R^3;\C^2) \\ \dim V=k}}\sup_{\psi\in (V\times \mathcal{C}^\infty_0(\R^3;\C^2) )\setminus\{0\}}\frac{\sclp[\Hs]{\psi,H_\nu \psi}}{\norm[\Hs]{\psi}^2}\,.
\end{align*}
The self-adjoint extension coincides with extension constructed in \cite{Esteban2007} and thus for $\nu<1$ also with the extensions of Schmincke \cite{Schmincke1972}, W{\"u}st \cite{Wust1975, Wust1977} and Nenciu \cite{Nenciu1976} (which were all proved to be equal by Klaus and W{\"u}st \cite{Klaus1979}). The variational principle for this distinguished extension is the same as the one obtained in \cite{Esteban2017}. 
 
To establish a second variational principle, we can choose $\Lambda_\pm$ to be spectral projections of the free Dirac operator,
\begin{align*}
\Lambda_+=P_{H_0}[0,\infty)\,,\quad\Lambda_-=P_{H_0}(-\infty,0)\,.
\end{align*}
Let $H_\nu$ again denote the Dirac operator with Coulomb potential acting on the domain $D(H_\nu)=F_+\oplus F_-\subset L^2(\R^3;\C^4)\eqqcolon\Hs$ with $F_\pm=\Lambda_\pm\mathcal{C}_0^\infty(\R^3;\C^4)$. 
The operator $H_\nu$ is symmetric and the first assumption of Theorem \ref{th:main} is satisfied. Again we can compute $\lambda_0$ to be
\begin{align*}
\lambda_0=\sup_{\psi\in F_-\setminus\{0\}}\frac{\sclp[\Hs]{\psi,(-\sqrt{1-\Delta}-\nu/|x|)\psi}}{\norm[\Hs]{\psi}^2}
\le\sup_{x\in\R^3}(-1-\nu/|x|)=-1\,.
\end{align*}
Using an abstract continuation principle, it was proved in \cite{Dolbeault2000} that $\lambda_1\ge0>\lambda_0$ for $\nu\in [0,1)$. To extend this result to the endpoint $\nu=1$ we note that by the above for any $\nu\in[0,1)$ the Schur complement $q_E$
\begin{align*}
&q_E(\psi,\psi)\\
&=\left\langle\psi,\Big(\sqrt{1-\Delta}-\frac{\nu}{|x|}-E\Big)\psi\right\rangle_{\!\Hs}
\!+\!\left\langle\Lambda_-\frac{\psi}{|x|},\Big(\Lambda_-\Big(\sqrt{1-\Delta}+\frac{\nu}{|x|}+E\Big)\Lambda_-\Big)^{-1}\Lambda_-\frac{\psi}{|x|}\right\rangle_{\Hs}\!
\end{align*}
satisfies $q_0(\psi,\psi)\ge 0$ for all $\psi\in\Lambda_+H^{1/2}(\R^3)$. 
Taking the limit $\nu\to1$ one obtains (see \cite[Lemma 15]{Esteban2017}) the analogue of \eqref{eq:HardyT}
\begin{align*}
\left\langle \psi,\Big(\sqrt{1-\Delta}-\frac{1}{|x|}\Big)\psi\right\rangle_{\!\Hs}
+\left\langle\Lambda_-\frac{\psi}{|x|},\Big(\Lambda_-\Big(\sqrt{1-\Delta}+\frac{1}{|x|}\Big)\Lambda_-\Big)^{-1}\Lambda_-\frac{\psi}{|x|}\right\rangle_{\!\Hs}\ge 0\,.
\end{align*}
In contrast to the case of the Talman projections, we are not aware of an analytic proof of this inequality. By Lemma \ref{lem:equiv} $(iii)$ this inequality proves that $\lambda_1\ge0>\lambda_0$ still holds in the endpoint case $\nu=1$. As discussed in the appendix, the operator $\Lambda_-(\sqrt{1-\Delta}+\nu/|x|)|_{F_-}$ is essentially self-adjoint on  $\Lambda_-\mathcal{C}_0^\infty(\R^3;\C^4)\subset\Lambda_-L^2(\R^3;\C^4)$. Thus all the conditions of Theorem \ref{th:main} are satisfied and for any $\nu\in [0,1]$ we obtain a a self-adjoint extension of $H_\nu$ with eigenvalues given by
\begin{align*}
\lambda_k=\inf_{\substack{V\subset \Lambda_+\mathcal{C}^\infty_0(\R^3;\C^4)  \\ \dim V=k}}\sup_{\psi\in (V\oplus \Lambda_-\mathcal{C}^\infty_0(\R^3;\C^4))\setminus\{0\}}\frac{\sclp[\Hs]{\psi,H_\nu \psi}}{\norm[\Hs]{\psi}}\,.
\end{align*}
This is the same result as in \cite{Esteban2017}.

\section{Self-adjoint extensions and the APS boundary condition}
\label{sec:APS}
We consider an operator of the form 
\begin{equation}
A=\sigma (\partial_x + B) 
\label{eq:APSoperator}
\end{equation}
acting on functions in $\mathcal H= L^2 (\mathbb R_-;\mathcal K)$ with $\mathcal K$ being a complex Hilbert space. The operator $B$ is densely defined on a domain $D(B)\subset \mathcal K$ and does not depend on $x$. It is self-adjoint and has discrete spectrum. The map $\sigma$ is an automorphism on $\mathcal K$, equally independent of $x$. Furthermore, we make the following assumptions:
\begin{enumerate}[(i)]
\item  $\sigma^2=-\mathbb{I}$ , $\sigma^*=-\sigma$,
\item $\{ B, \sigma \} =B\sigma + \sigma B =0$,
\item $\dim \ker B < \infty$, $\ker B= \mathcal N_+\oplus \mathcal N_- $for some subspaces $\mathcal N_{\pm}  \subset D(B)$ and
\item $\sigma (\mathcal N_-)=\mathcal N_+$\ .
\end{enumerate}
On the domain $D(A)= \mathcal C_0 ^{1} (\mathbb R_-; D(B))$, $A$ is a well-defined  symmetric operator. Continuity and differentiability on the set $D(A)$ are defined with respect to the graph norm of $B$ on $D(B)$. 
\begin{remark}
\begin{itemize}
\item[]
\item As an example we could take $\mathcal K$ to be the Hilbert space of square integrable functions on $\mathbb S^1$, parametrised by a variable $y$. If then $B$ is defined on the continuously differentiable periodic functions as $B= i \sigma_3 \partial_y$ and $\sigma=i \sigma_2$, with the Pauli matrices $\sigma_i$, we may identify $A$ with the Dirac operator on the infinitely long cylinder bounded from one side.
\item
More generally, we may think of $B$ being any first-order differential operator on some closed manifold $\Sigma$ such that $A$ represents a differential operator (of first order) on a generalised cylinder. In fact, any first-order elliptic operator on a compact manifold $M$ with boundary $\Sigma$ takes a form as given in \eqref{eq:APSoperator} on a collar neighbourhood of the boundary \cite{Atiyah1975}, but $B$ and $\sigma$ are not necessarily independent of $x$.
\end{itemize}
\end{remark}
These special cases are included but we do not restrict ourselves to them. 

From assumptions $(i)$, $(iii)$ and $(iv)$ we conclude that the kernel of $B$ is of even dimension, hence $\dim \ker B = 2 N_0$ for some $N_0 \in \mathbb N_0$. Also, the vanishing anticommutator $\{B, \sigma\}$ implies that $\sigma$ maps elements from the positive spectral subspace of $B$ to the negative spectral subspace and vice versa. 

It is well-known (cf.\ e.g.\ \cite{Douglas1991}, \cite{Furutani2006}) that $A|_{\mathcal C_0^{1}(\mathbb R_-; D(B))}$ has a self-adjoint extension characterised by a non-local boundary condition known as the `Atiyah--Patodi--Singer boundary condition'. 
Let us denote by $\mathcal P^+_{B>0}$ the projection onto the sum of $\mathcal N_+$ and the positive spectral subspace of $B$. Then it holds:
\begin{proposition}[APS]
The operator $A$ from \eqref{eq:APSoperator} is self-adjoint on the domain 
\begin{equation}
D(A_{APS})= \{f \in L^2 (\mathbb R_-; D(B)) \cap H^1 (\mathbb R_-; \mathcal K ) \ \mid \ \mathcal P^+_{B>0} f_{|_{x=0}} = 0 \}.
\label{eq:APSbc}
\end{equation}
\label{prop:APS}
\end{proposition}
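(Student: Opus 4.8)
The plan is to establish self-adjointness via the standard criterion that a densely defined symmetric operator $T$ with $D(T^*)\subset D(T)$ is self-adjoint; so there are exactly two things to prove: that $A_{APS}$ is symmetric, and that every $g\in D(A_{APS}^*)$ already obeys the APS boundary condition. Throughout I would write $\mathcal P^-\coloneqq\id-\mathcal P^+_{B>0}$ for the orthogonal projection onto the orthogonal complement of $\ran\mathcal P^+_{B>0}$, i.e.\ (using that the decomposition $\ker B=\mathcal N_+\oplus\mathcal N_-$ is orthogonal) the projection onto $\mathcal N_-$ plus the negative spectral subspace of $B$, so that $D(A_{APS})=\{f\in L^2(\mathbb R_-;D(B))\cap H^1(\mathbb R_-;\mathcal K):f|_{x=0}\in\ran\mathcal P^-\}$; the trace $f|_{x=0}$ makes sense for such $f$ by the one-dimensional Sobolev embedding, and density of $D(A_{APS})$ is clear since $\mathcal C_0^1(\mathbb R_-;D(B))\subset D(A_{APS})$.

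Next I would record the boundary form. For $f,g\in D(A_{APS})$ the $H^1$-decay kills the contribution at $-\infty$, and integrating by parts while using $\sigma^*=-\sigma$, $B^*=B$ and $\{B,\sigma\}=0$ gives
\begin{align*}
\langle A f,g\rangle_{\mathcal H}-\langle f,A g\rangle_{\mathcal H}=-\langle f(0),\sigma\,g(0)\rangle_{\mathcal K}\,.
\end{align*}
The key algebraic point is that $\sigma$ intertwines the two projections: by assumption $(ii)$ the relation $B\sigma=-\sigma B$ forces $\sigma$ to map the positive spectral subspace of $B$ onto the negative one and conversely, while by $(iv)$ and $\sigma^2=-\id$ one has $\sigma(\mathcal N_-)=\mathcal N_+$ and $\sigma(\mathcal N_+)=\mathcal N_-$; hence $\sigma(\ran\mathcal P^-)=\ran\mathcal P^+_{B>0}$. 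Since $\mathcal P^-$ and $\mathcal P^+_{B>0}$ are complementary orthogonal projections, $\sigma g(0)\in\ran\mathcal P^+_{B>0}$ is orthogonal to $f(0)\in\ran\mathcal P^-$, so the boundary term vanishes and $A_{APS}$ is symmetric.

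For the inclusion $D(A_{APS}^*)\subset D(A_{APS})$, take $g\in D(A_{APS}^*)$ and set $h\coloneqq A_{APS}^*g$. Testing $\langle A_{APS}f,g\rangle_{\mathcal H}=\langle f,h\rangle_{\mathcal H}$ against $f\in\mathcal C_0^1(\mathbb R_-;D(B))$ shows $\sigma(\partial_x+B)g=h$ distributionally. To upgrade this to $g\in L^2(\mathbb R_-;D(B))\cap H^1(\mathbb R_-;\mathcal K)$ I would exploit that $B$ has discrete spectrum and anticommutes with $\sigma$: decomposing $\mathcal K$ into $\ker B$ together with the $\sigma$-invariant and $B$-invariant two-eigenvalue blocks $V_\lambda\oplus V_{-\lambda}$ (with $V_{\pm\lambda}=\ker(B\mp\lambda)$, $\sigma V_\lambda=V_{-\lambda}$) reduces the equation to a family of constant-coefficient first-order ODE systems on $(-\infty,0)$, whose $L^2$ solutions are explicit exponentials; summing the block estimates yields the claimed regularity and in particular a trace $g(0)$. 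With $g$ this regular the integration-by-parts identity above is valid for all $f\in D(A_{APS})$, and $\langle A_{APS}f,g\rangle_{\mathcal H}=\langle f,h\rangle_{\mathcal H}$ then forces $\langle f(0),\sigma g(0)\rangle_{\mathcal K}=0$ for every $f\in D(A_{APS})$. Because the trace map $D(A_{APS})\to\ran\mathcal P^-$ hits a dense set — for $\xi\in D(B)\cap\ran\mathcal P^-$ take $f(x)=\chi(x)\xi$ with a smooth cutoff $\chi$, $\chi(0)=1$ — we get $\sigma g(0)\perp\ran\mathcal P^-$, i.e.\ $\sigma g(0)\in\ran\mathcal P^+_{B>0}$, hence $g(0)=\sigma^{-1}(\sigma g(0))\in\sigma(\ran\mathcal P^+_{B>0})=\ran\mathcal P^-$, which is exactly $\mathcal P^+_{B>0}g(0)=0$. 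Thus $g\in D(A_{APS})$ and $A_{APS}$ is self-adjoint.

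The main obstacle is the regularity-and-decay step: converting the distributional identity $\sigma(\partial_x+B)g=h\in L^2$ into genuine membership in $L^2(\mathbb R_-;D(B))\cap H^1(\mathbb R_-;\mathcal K)$ with a well-defined boundary value. The block decomposition turns each block into a solvable constant-coefficient system, but one must sum the resulting estimates uniformly over the infinitely many spectral blocks, handle the $\ker B$ block (where $A$ reduces to $\sigma\partial_x$, giving affine solutions, and the $\mathcal N_\pm$ splitting is used), and justify that the decomposition commutes with the distributional equation. Everything else — the boundary form, the symmetry, and the recovery of the boundary condition from the adjoint — is then routine.
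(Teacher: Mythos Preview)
The paper offers no proof beyond the remark that it ``follows by explicit calculation'', so you are filling in what is omitted. Your symmetry argument is correct. The problem lies precisely where you locate it, in the regularity step, but the obstacle is not merely technical. In the block for an eigenvalue $\lambda>0$ of $B$, the equation $g_\lambda'+\lambda g_\lambda=-\sigma h_{-\lambda}$ has a unique $L^2(\mathbb R_-)$ solution because $e^{-\lambda x}\notin L^2(\mathbb R_-)$, and Young's inequality gives $\lambda\|g_\lambda\|\le\|h_{-\lambda}\|$; summing over $\lambda$ controls $B\,\mathcal P^+_{B>0}g$ in $L^2$ by $\|h\|$. For the negative component, however, the equation $g_{-\lambda}'-\lambda g_{-\lambda}=-\sigma h_\lambda$ admits the homogeneous $L^2(\mathbb R_-)$ solution $e^{\lambda x}$, so each $g_{-\lambda}$ carries a free constant $c_\lambda$ that is \emph{not} controlled by $h$. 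Knowing only $g\in L^2$ gives roughly $\sum_\lambda|c_\lambda|^2/\lambda<\infty$, which does not force $\sum_\lambda\lambda|c_\lambda|^2<\infty$, i.e.\ $Bg\in L^2$.

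This gap is not removable: when $\mathcal K$ is infinite-dimensional the statement as written fails. Take $g=\sum_{k>0}c_k\,e^{\ell_k x}\varphi_{-k}$ with $\sum_k|c_k|^2/\ell_k<\infty$ but $\sum_k\ell_k|c_k|^2=\infty$. Then $g\in\mathcal H$ while $Bg\notin L^2$, and for every $f\in D(A_{APS})$ the pairing $\langle A_{APS}f,g\rangle$ reduces to a sum over $k>0$ of integrals $\int_{-\infty}^0(\partial_x f_k+\ell_k f_k)\,e^{\ell_k x}\,dx$, each of which vanishes after one integration by parts since $f_k(0)=0$. Hence $g\in\ker A_{APS}^*\setminus D(A_{APS})$, so $A_{APS}$ is symmetric but not self-adjoint on the stated domain. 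Your argument does go through when $\dim\mathcal K<\infty$ (then $B$ is bounded and $L^2(\mathbb R_-;D(B))=L^2(\mathbb R_-;\mathcal K)$); in the general case one must enlarge the domain, dropping the requirement $Bf\in L^2$ on the $\mathcal K_-$-part.
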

Also here, $D(B)$ and $\mathcal K$ are to be understood as Hilbert spaces with their respective norms, in the sense that $f\in D(A_{APS})$ is a function such that $\norm {f}$, $\norm {Bf}$ and $\norm {\partial_x f}$ are all square-integrable. A proof of this proposition follows by explicit calculation.

Starting from the symmetric operator in \eqref{eq:APSoperator} we can show that it falls into the class of gapped operators for which our construction of a self-adjoint extension applies. Indeed we find:
\begin{theorem}
Theorem \ref{th:main} applies to the operator $A=\sigma (\partial_x + B)$ defined on $\mathcal C^1_0 (\mathbb R_-; D(B))$. The self-adjoint extension constructed in this way coincides with the Atiyah--Patodi--Singer extension $A_{APS}$ from Proposition \ref{prop:APS}.
\label{thm:APS}
\end{theorem}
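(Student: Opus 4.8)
The goal is twofold: first, to verify that the abstract hypotheses (i)--(iii) of Theorem \ref{th:main} are satisfied by the operator $A=\sigma(\partial_x+B)$ on $D(A)=\mathcal C^1_0(\mathbb R_-;D(B))$; second, to identify the resulting extension $A_F$ with $A_{APS}$. For the first part I would take the orthogonal decomposition $\mathcal H = \mathcal H_+\oplus\mathcal H_-$ to be induced by the spectral decomposition of $B$ together with the splitting $\ker B=\mathcal N_+\oplus\mathcal N_-$: set $\Lambda_+ = \mathcal P^+_{B>0}$ (projection onto $\mathcal N_+$ plus the positive spectral subspace) and $\Lambda_- = \mathbb I - \Lambda_+$. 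Because $B$ and $\sigma$ are $x$-independent, these projections act fibrewise and clearly map $D(A)$ into itself, so $F_\pm = \Lambda_\pm D(A)\subset D(A)$; this gives (i).

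For the gap condition (ii), the key computation is the quadratic form $\langle z, Az\rangle_{\mathcal H}$ for $z\in D(A)$. Using $\sigma^*=-\sigma$, $\sigma^2=-\mathbb I$ and integration by parts on $\mathbb R_-$ with the boundary term at $x=0$, one finds an expression of the form $\langle z,Az\rangle = -\,[\text{boundary term at }x=0] + \langle \text{something involving }\partial_x z\text{ and }Bz\rangle$; the anticommutation $\{B,\sigma\}=0$ is what makes the cross terms organize themselves. On $F_-$ (lower-spinor-type components, i.e.\ the range of $\Lambda_-$) the form should be bounded above by $0$, giving $\lambda_0\le 0$; a matching test function pushes $\lambda_0 = 0$ (or one simply records $\lambda_0\le 0$ and proceeds). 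For $\lambda_1$, one must show $\lambda_1>\lambda_0$, i.e.\ a strict lower bound after the inner maximisation over $F_-$; this is exactly a Hardy-type inequality for the first-order operator on the half-line, and on $\mathbb R_-$ with a mass-type gap in $B$ it should follow from an elementary one-dimensional estimate (the analogue of \eqref{eq:HardyT}), since the half-line Dirac operator $\sigma\partial_x$ itself has the spectral gap $(-1,1)$ built in once $B$ is bounded away from $0$ on the relevant subspace and the zero modes are handled by the $\mathcal N_\pm$ splitting. Assumption (iii) requires $\Lambda_- A|_{F_-}$ to be essentially self-adjoint; since on $F_-$ the operator reduces (after the fibrewise identification) to a half-line first-order operator of the form $\sigma(\partial_x + B)$ restricted to a subspace where $\sigma$ maps $\mathcal N_-$ out of $\mathcal H_-$, one checks essential self-adjointness directly — this is where the assumption $\sigma(\mathcal N_-)=\mathcal N_+$ is used, ensuring no residual zero-mode boundary obstruction on the $\mathcal H_-$ side, so that $\Lambda_- A|_{F_-}$ has equal and trivial (or appropriately matched) deficiency indices. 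I expect this essential self-adjointness verification to be the main technical obstacle, because it is the one hypothesis of Theorem \ref{th:main} singled out as nontrivial in the introduction, and on the half-line one must rule out $L^2$ solutions of $(\Lambda_-A|_{F_-})^* u = \pm i u$ that do not vanish at the boundary.

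**Identification with $A_{APS}$.** Having invoked Theorem \ref{th:main} to obtain $A_F$, the final step is to show $D(A_F)=D(A_{APS})$. The clean route is through the uniqueness clause of Theorem \ref{th:main}: $A_F$ is the \emph{unique} self-adjoint extension of $A$ with $D(A_F)\subset\mathcal F_+\oplus\mathcal H_-$. So it suffices to check (a) that $A_{APS}$ is a self-adjoint extension of $A$, which is Proposition \ref{prop:APS}, and (b) that $D(A_{APS})\subset \mathcal F_+\oplus\mathcal H_-$, where $\mathcal F_+\subset\mathcal H_+$ is the form domain of the Schur complement $Q_E$. For (b) one identifies $\mathcal F_+$ concretely: unwinding Subsection \ref{subsec:QE} in this example, the norm $\|x_+\|_{\mathcal F_+,E}$ controls an $H^{1/2}$-in-$x$ together with a $B^{1/2}$-type norm on the $\Lambda_+$-component, and the APS boundary condition $\mathcal P^+_{B>0} f|_{x=0}=0$ is precisely what guarantees that the $\Lambda_+$-part of an $H^1(\mathbb R_-;\mathcal K)\cap L^2(\mathbb R_-;D(B))$ function lies in this completion (the boundary condition kills exactly the trace components that would otherwise obstruct closability of $q_E$). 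Concretely I would show that for $f\in D(A_{APS})$, writing $f=\Lambda_+f+\Lambda_-f$, one has $\Lambda_-f - L_E\Lambda_+f\in D(B)$ and $\Lambda_+ f\in\mathcal F_+$, by solving the defining ODE for $L_E$ explicitly on $\mathbb R_-$ (the resolvent $(B+E)^{-1}$ acts fibrewise, and $L_E x_+$ is given by an exponentially-damped convolution, so the computation is of the same flavour as the ``explicit calculation'' proof of Proposition \ref{prop:APS}). Then uniqueness forces $A_F = A_{APS}$. The one point requiring care is that the subspaces $\mathcal N_\pm$ of $\ker B$ are distributed between $\mathcal H_+$ and $\mathcal H_-$ in a way compatible with both the Talman-type splitting used for the hypotheses and the APS projection $\mathcal P^+_{B>0}$; with the choice $\Lambda_+ = \mathcal P^+_{B>0}$ made at the outset, this compatibility is automatic, and that is the conceptual heart of why the construction of Theorem \ref{th:main} reproduces the APS extension.
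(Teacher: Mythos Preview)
Your overall strategy---verify hypotheses (i)--(iii) of Theorem~\ref{th:main}, then invoke the uniqueness clause to identify $A_F$ with $A_{APS}$---matches the paper's. But you miss the single observation that drives the whole argument and consequently misjudge where the difficulty lies.

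The key point is that $\sigma$ interchanges $\mathcal K_+$ and $\mathcal K_-$: from $\{B,\sigma\}=0$ and $\sigma(\mathcal N_-)=\mathcal N_+$ one gets $\sigma(\mathcal K_\pm)=\mathcal K_\mp$. Since $A=\sigma(\partial_x+B)$ and $\partial_x+B$ preserves each $\mathcal H_\pm$, it follows that $A$ maps $\mathcal H_-$ entirely into $\mathcal H_+$, i.e.\ $\Lambda_-A|_{F_-}=0$. This makes hypothesis~(iii) \emph{trivial} (the zero operator is essentially self-adjoint) and gives $\lambda_0=0$ immediately. Your expectation that (iii) would be ``the main technical obstacle'', with a deficiency-index analysis to rule out boundary $L^2$ solutions, is therefore misplaced; there is nothing to do.

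This same observation dictates the form of the Schur complement and hence of $\mathcal F_+$. With the abstract ``$B$'' of Theorem~\ref{th:main} equal to zero, one has $(B+E)^{-1}=E^{-1}$ and $\Lambda_-Au=\sigma(\partial_x+B)u$ for $u\in F_+$, so
\[
q_E(u,u)=-E\|u\|^2+\tfrac{1}{E}\bigl(\|\partial_x u\|^2+\|Bu\|^2\bigr),
\]
and likewise $\|L_Eu\|^2=E^{-2}(\|\partial_x u\|^2+\|Bu\|^2)$. Thus $\|\cdot\|_{\mathcal F_+,E}$ is equivalent to the full norm $\|u\|+\|\partial_x u\|+\|Bu\|$, and $\mathcal F_+=L^2(\mathbb R_-;D(B)\cap\mathcal K_+)\cap H^1_0(\mathbb R_-;\mathcal K_+)$. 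Your conjecture that $\mathcal F_+$ carries only an ``$H^{1/2}$-in-$x$ together with a $B^{1/2}$-type norm'' is wrong, and would not even match the regularity in $D(A_{APS})$. Once $\mathcal F_+$ is identified correctly, the inclusion $D(A_{APS})\subset\mathcal F_+\oplus\mathcal H_-$ is direct: the APS condition $\mathcal P^+_{B>0}f|_{x=0}=0$ says exactly that $\Lambda_+f\in H^1_0$ on the $\mathcal K_+$-side.

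For the gap condition, no abstract Hardy-type inequality is needed. The paper expands $u\in F_+$, $v\in F_-$ in eigenvectors $\varphi_k$ of $B$, reduces $\langle u+v,A(u+v)\rangle$ to a weighted sum over modes, and optimises in $v$ explicitly to obtain $\sup_{v}\frac{\langle u+v,A(u+v)\rangle}{\|u+v\|^2}=\sup_{k>0:\,u_k\ne 0}\|(-\partial_x-\ell_k)u_k\|/\|u_k\|$. The infimum over $u$ then reduces to the bottom of the spectrum of $-\partial_x^2+\ell_k^2$ with Dirichlet condition at $0$, which is strictly positive.
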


In this sense the extension from Theorem \ref{th:main} is characterised by the global boundary conditions from \eqref{eq:APSbc}. We prove Theorem \ref{thm:APS} in the remainder of this section.

First, if we write $\mathcal K_{\pm}= \mathcal P _{B \gtrless0} \mathcal K \oplus \mathcal N_{\pm}$ and $\mathcal H_{\pm}=L^2(\mathbb R_-; \mathcal K_{\pm})$ then $\mathcal H= \mathcal H_+ \oplus \mathcal H_-$ is an orthogonal decomposition of $\mathcal H$ with corresponding orthogonal projections $\Lambda_{\pm}$ such that 
\begin{equation*}
F_{\pm} =\Lambda_{\pm} D(A) = \mathcal C_0^{1} (\mathbb R_-; \mathcal K_{\pm} \cap D(B)) \subset D(A)\, .
\end{equation*}
It is easy to see that $\sigma(\mathcal K_+)=\mathcal K_-$ and $\sigma(\mathcal K_-)=\mathcal K_+$. We can conclude that $\Lambda_- A \Lambda_-=0$ and hence $\Lambda_-A\Lambda_-$  is essentially self-adjoint on $\mathcal C_0^{1}(\mathbb R_-;D(B))$ and $\lambda_0=0$.

Let $\ell_k$, $k\in \mathbb Z\setminus \{0\}$ be the eigenvalues of $B$ such that $\ell_k \geq \ell_{k'}$ if $k>k'$ and $\ell_k=0$ for $-N_0 \leq k \leq N_0$. We denote the corresponding eigenvectors of $B$ by $\varphi_k$ and assume they are chosen such that $\sigma \varphi_k = \varphi_{-k}$ and $\sigma \varphi_{-k}=- \varphi_k$. Any function $u \in F_+$ has then an expansion 
\begin{equation*}
u(x)=\sum_{k>0} u_k(x) \varphi_k
\end{equation*}
with functions $u_k \in \mathcal C_0^{1}(\mathbb R_-; \mathbb C)$ and similarly for $v\in F_-$. 

We can then write for any $u \in F_+$ and any $v \in F_-$
\begin{equation*}
\begin{split}
&\sclp{u+v, \sigma (\partial_x + B)(u+v)}\\
&=\sum_{k>0} \sum_{l<0} \sclp {u_k \varphi_k, \sigma (\partial_x + B) v_l \varphi_l}  + \text{complex conjugate}\\
&=\sum_{k>0}\sum_{l<0} \sclp {u_k \varphi_k, (\partial_x-B) v_l \varphi_{-l}} + \text{c.c.}\\
&=\sum_{k,l >0} \sclp{u_k \varphi_k, (\partial_x-\ell_l)v_{-l}\varphi_l}+\text{c.c.}\\
&=\sum_{k>0} \sclp{u_k, (\partial_x - \ell_k) v_{-k}}+\text{c.c.}\ .
\end{split}
\end{equation*}
Hence, we can rewrite the expectation value of $A$ as 
\begin{equation*}
 \frac{\sum_{k>0} \sclp {u_k, (\partial_x-\ell_k) v_{-k}}+\text{c.c.}}{\norm{u}^2+\norm{v}^2}=\sum_{\substack{k >0:u_k\neq 0 \text{ or } v_{-k}\neq 0}} \frac {\norm{u_k}^2+\norm{v_{-k}}^2}{\norm{u}^2+\norm{v}^2} \frac{\sclp {u_k, (\partial_x-\ell_k) v_{-k}}+\text{c.c.}}{\norm{u_k}^2+\norm{v_{-k}}^2}
\end{equation*}
which we can identify with an arithmetic mean of expectation values for single $k$'s weighted according to their norm. Clearly, this expression is bounded by 
\begin{equation*}
\sup_{\substack{k >0:u_k\neq 0 \text{ or } v_{-k}\neq 0}}\frac{\sclp {u_k, (\partial_x-\ell_k) v_{-k}}+\text{c.c.}}{\norm{u_k}^2+\norm{v_{-k}}^2}\ .
\end{equation*}
Taking the supremum over $v \in F_-$ finally gives an upper bound
\begin{equation*}
\sup_{v \in F_-} \frac{\sclp{u+v, A(u+v)}}{\norm{u}^2+\norm{v}^2} \leq \sup_{k >0} \sup_{v_{-k}\neq 0}\frac{\sclp {u_k, (\partial_x-\ell_k) v_{-k}}+\text{c.c.}}{\norm{u_k}^2+\norm{v_{-k}}^2}
\end{equation*}
and indeed, equality holds since we can always find a sequence of $v$ approximating the right-hand-side. Clearly, we find the supremum over $v_{-k}$ by choosing $v_{-k}=\lambda (-\partial_x - \ell_k)u_k$ for some real number $\lambda$. Maximising over all values of $\lambda$ we find $\lambda=\frac{\norm{u_k}}{\norm{(-\partial_x -\ell_k)u_k}}$ and hence 
\begin{equation*}
\sup_{v \in F_-} \frac{\sclp{u+v, A(u+v)}}{\norm{u}^2+\norm{v}^2}= \sup_{\substack{k >0:\\u_k \neq 0}} \frac{\norm{(-\partial_x-\ell_k)u_k}}{\norm{u_k}}\ .
\end{equation*}
Note that the left-hand side is precisely $E(u)$ as defined in Lemma \ref{lem:var}. By construction the supremum is achieved at $v=L_{E(u)}u$ which coincides with the relation $v_{-k}=\lambda(-\partial_x-\ell_k)u_k$ above for $\lambda=E(u)^{-1}$. 

Taking the infimum over all $u \neq 0$ we then obtain
\begin{equation}
\lambda_1=\inf_{\substack{u\in \mathcal C_0^{1}(\mathbb R_-;\mathcal K_+ \cap D(B))\\ u\neq 0}} \sup_{\substack{k> 0:\\u_k \neq 0}} \left[ \ell_k^2 + \frac{\norm{-\partial_x u_k}^2}{\norm{u_k}^2}\right]^{\frac{1}{2}}= \pi >0\ 
\label{eq:lambda1}
\end{equation}
if there is an $\ell_k= 0$. We have used that by the variational principle for the Friedrichs extension of the Laplace operator the last term gives the lowest eigenvalue of the Dirichlet Laplacian. If $\ker B= \{0\}$, then $\lambda_1 >\pi$.
In both cases, $A$ is a gapped operator and all assumptions for constructing a self-adjoint extension as in Theorem~\ref{th:main} are satisfied. 

For comparison with the APS-extension we are interested in the domain of $A_F$, that is in particular in how the Hilbert space $\mathcal F_+$ appears  in this setting. Recall that $\mathcal F_+$ is the closure of $F_+$ in the norm $\norm[F_+, E]{\cdot}$ constructed from the quadratic form $q_E$ and the graph norm of the operator $L_E$
\begin{equation*}
\norm[F_+, E]{u}^2=q_E(u, u )+ \kappa_E \norm[\mathcal H]{L_E u}^2\ ,
\end{equation*}
where in our setting the quadratic form $q_E(u,u)= -E \norm{u}^2+ \frac{1}{E} \left(\norm{\partial_x u}^2+\norm{Bu}^2\right)$ and $\norm{L_E u}^2= \frac{1}{E^2} \left(\norm{\partial_x u}^2+\norm{Bu}^2\right)$. Using that
\begin{equation*}
\lambda_1^2 \norm{u}^2 \leq \norm{\partial_x u}^2+\norm{Bu}^2\, ,
\end{equation*}
which follows from comparison with \eqref{eq:lambda1}, it is directly seen that $q_E(u,u) \leq (\lambda_1-E) \norm{u}^2+(\frac{1}{E}- \frac{1}{\lambda_1} )(\norm{\partial_x u}^2+\norm{Bu}^2)$ and hence $\norm[F_+, E]{\cdot}$ is equivalent to the sum of norms $\norm{\cdot}+\norm{\partial_x \cdot}+\norm{B\cdot}$ as long as $E < \lambda_1$. Closing $F_+$ in this norm gives the Hilbert space $ L^2(\mathbb R_-; D(B) \cap \mathcal K_+)\cap H^1_0 (\mathbb R_-; \mathcal K_+)$, as can be seen from the following argument. Given an $f\in  L^2(\mathbb R_-; D(B) \cap \mathcal K_+)\cap H^1_0 (\mathbb R_-; \mathcal K_+)$, we use the standard approximation by smooth, compactly supported functions by  mollification of a function in $H^1_0 (\mathbb R_-; \mathcal K_+)$, which allows to construct a sequence of $f_n \in \mathcal C ^{1}_0 (\mathbb R_-; D(B) \cap \mathcal K_+)$ that converges to $f$  in the norm $\norm{\cdot}+\norm{\partial_x \cdot}$, see e.g. \cite[Chapter 5.5]{Evans}. Since $Bf \in L^2(\mathbb R_-; \mathcal K_+)$, it may be approximated in the same fashion such that the sequence $\{f_n\}$ will also converge in $\norm{\cdot}+\norm{B\cdot}$.           
These considerations show that indeed $D(A_{APS})\subset \mathcal F_+ \oplus \mathcal H_-$ and thus $A_{APS}$ coincides with $A_F$ by the uniqueness property proved in Theorem \ref{th:main}.

\appendix
\section{Essential Self-Adjointness of the Brown--Ravenhall Operator}
Let $H_0$ be the self-adjoint free Dirac operator with domain $H^1(\R^3;\C^4)\subset L^2(\R^3;\C^4)$ and denote by $\Lambda_\pm$ the projections onto the positive/negative spectral subspace of $H_0$. For $\gamma\in\R$ the Brown--Ravenhall operator \cite{Brown1951} is defined as
\begin{align*}
B_\gamma=\Lambda_+(H_0-\gamma/|x|)\Lambda_+
=\Lambda_+(\sqrt{1-\Delta}-\gamma/|x|)\Lambda_+\,.
\end{align*}
on the Hilbert space $\Lambda_+L^2(\R^3;\C^4)$. For a comprehensive review we refer to the textbook of Balinsky and Evans \cite{Balinsky2011}.    
While the physically relevant case is $\gamma>0$, we are interested in the case where $\gamma=-\nu\in[-1,0]$. For $\gamma<3/4$ the operator $B_\gamma$ was proved to be self-adjoint on $\Lambda_+H^1(\R^3;\C^4)$ by Tix \cite{Tix1997}. Since $\Lambda_+H^1(\R^3;\C^4)\subset H^1(\R^3;\C^4)$ we obtain from Hardy's inequality
\begin{align*}
\norm[L^2(\R^3;\C^4)]{\Lambda_+\frac{\gamma}{|x|}\Lambda_+\psi}
\le2\gamma\norm[L^2(\R^3;\C^4)]{\frac{1}{2|x|}\Lambda_+\psi}
\le 2\gamma\norm[L^2(\R^3;\C^4)]{\nabla\Lambda_+\psi}
\end{align*}
for any $\psi\in H^1(\R^3;\C^4)$. To prove that $B_{-\nu}$ is essentially self-adjoint on $\Lambda_+\mathcal{C}_0^\infty(\R^3;\C^4)$, it thus suffices to prove the statement for $B_{0}$. This is an immediate consequence of the fact that the free Dirac operator $H_0$ is essentially self-adjoint on $\mathcal{C}_0^\infty(\R^3;\C^4)$. 

We can also conclude that the operator  $\Lambda_-(\sqrt{1-\Delta}+\nu/|x|)\Lambda_-$ is essentially self-adjoint on $\Lambda_-\mathcal{C}_0^\infty(\R^3;\C^4)$ since it is unitarily equivalent to the Brown--Ravenhall operator $B_{-\nu}$ via the transform $U:L^2(\R^3;\C^4)\to L^2(\R^3;\C^4)$
\begin{align*}
\left[U{\psi_1\choose \psi_2}\right](x)={\psi_2(-x)\choose \psi_1(-x)}.
\end{align*}

\bibliographystyle{amsalpha}

\end{document}